\newtheorem{theorem}{Theorem}
\newtheorem{corollary}{Corollary}
\newtheorem{lemma}{Lemma}
\newtheorem{exa}[theorem]{Example}
\newenvironment{example}{\begin{exa} \rm}{\end{exa}}
\newcommand{\Ag}{\textit{Ag} }
\newcounter{symbol}
\newcommand{\indexsyma}[1]%
{\stepcounter{symbol}\index{zzz1 \thesymbol @\protect#1}}
\newcommand{\indexsymb}[1]%
{\stepcounter{symbol}\index{zzz2 \thesymbol @\protect#1}}
\newcommand{\indexsymc}[1]%
{\stepcounter{symbol}\index{zzz3 \thesymbol @\protect#1}}
\newcommand{\indexsymd}[1]%
{\stepcounter{symbol}\index{zzz4 \thesymbol @\protect#1}}
\newcommand{\indexsyme}[1]%
{\stepcounter{symbol}\index{zzz5 \thesymbol @\protect#1}}
\newcommand{\bfe}[1]{\begin{bfseries}\emph{#1}\end{bfseries}\index{#1}}
\newcommand{\szkew}[1]{\relax \setbox0=\hbox{\kern -24pt $\displaystyle#1$\kern 0pt }%
\box0}
{\catcode`\@=11 \global\let\ifjusthvtest@=\iffalse}
\newcounter{oldmycaption}
\newcommand{\Lang}{\mathfrak{L}}
\renewcommand{\L}{L}
\title{Common Knowledge in Email Exchanges}
\author{
Floor Sietsma$^*$ and
Krzysztof R. Apt\footnote{Centre for
    Mathematics and Computer Science (CWI), Science Park 123, 1098 XG
    Amsterdam, the Netherlands,
    and University of Amsterdam
}
}
\date{}
\begin{document}

\maketitle

\begin{abstract}
  We consider a framework in which a group of agents communicates by
  means of emails, with the possibility of replies, forwards and blind
  carbon copies (BCC).  We study the epistemic consequences of such
  email exchanges by introducing an appropriate epistemic language and
  semantics.  This allows us to find out what agents learn from the
  emails they receive and to determine when a group of agents acquires
  common knowledge of the fact that an email was sent.
  We also show that in our framework from the epistemic point of view the BCC feature
  of emails cannot be simulated using messages without BCC recipients.
\end{abstract}

\section{Introduction}

\subsection{Motivation}

Email is by now a prevalent form of communication.  From the point of
view of distributed programming it looks superficially as an instance
of multicasting ---one agent sends a message to a group of agents.
However, such features as forwarding and the \emph{blind carbon copy}
(BCC) make it a more complex form of communication.

The reason is that each email implicitly carries epistemic information
concerning (among others) common knowledge of the group involved in it
of the fact that it was sent.  As a result forwarding leads to nested common
knowledge and typically involves at each level different groups of
agents.  In turn, the BCC feature results in different information
gain by the regular recipients and the BCC recipients.
In fact, in Section~\ref{sec:bcc} we show that the BCC feature
is new from the epistemic point of view.

To be more specific, suppose that an agent $i$ forwards a message $m$
to a group $G$.  Then the group $G \cup \{i\}$ acquires (among others)
common knowledge of the fact that the group $A$ consisting of the
sender and the receivers of $m$ has common knowledge of $m$.
Next, suppose that an agent $i$ sends a message $m$ to a
group $G$ with a BCC to a group $B$. Then 
the group $G \cup \{i\}$ acquires common knowledge of $m$,
while each member of $B$ separately 
acquires with the sender of $m$ common knowledge of the fact that
the group $G \cup \{i\}$ acquires common knowledge of $m$.

Combining forward and BCC we can realize epistemic formulas $C_{A_1}...C_{A_k} m$, where $C_A$ stands for `the group $A$ has common
knowledge of', of arbitrary depth. Further, this combination can lead
to a(n undesired) situation in which a BCC recipient of an email
reveals his status to others by using the \emph{reply-all} feature.
In general, a chain of forwards of arbitrary length can reveal to a
group of agents that an agent was a BCC recipient of the original
email.  We conclude that the email exchanges, as studied here, are
essentially different from multicasting.

Epistemic consequences of email exchanges are
occasionally raised by researchers in various contexts.
For instance, the author of \cite{Bab90} mentions `some issues of email ethics' 
by discussing a case of an email discussion in which some researchers
were not included (and hence could not build upon the reported results).

Then consider the following recent quotation from a blog
in which the writers call for a boycott of a journal XYZ: ``We are
doing our best to make the misconduct of the Editors-in-Chief a matter
of common knowledge within the [...] community in the hope that
everyone will consider whatever actions may be appropriate for them to
adopt in any future associations with XYZ''.

So when studying email exchanges a natural question arises: what are
their knowledge-theoretic consequences? To put it more informally:
after an email exchange took place, who knows what?  Motivated by the
above blog entry we can also ask: can sending emails to more and more new
recipients ever create common knowledge? (Our Main Theorem
shows that the answer is ``No.'')

To be more specific consider the following example to which we shall return later.
\begin{example} \label{exa:1}
Assume the following email exchange involving four people,
Alma, Bob, Clare and Daniel:

\begin{itemize}
\item Alma and Daniel got an email from Clare,

\item Alma forwarded it to Bob,

\item Bob forwarded Alma's email to Clare and Daniel with a BCC to Alma,

\item Alma forwarded the last email to Clare and Daniel with a BCC to Bob.

\end{itemize}
It is natural to ask for example what Alma has actually learned from Bob's email.
Also, do all four people involved in this exchange have common knowledge of the original email by Clare?
\end{example}

To answer such questions we study email exchanges focusing on relevant features that we
encounter in most email systems.
More specifically, we study the following form of email communication:
\begin{itemize}

\item each email has a sender, a non-empty set of regular recipients and a
(possibly empty) set of blind carbon copy (BCC)
  recipients. Each recipient receives a copy of the message and is only
aware of the regular recipients and not of the BCC recipients (except himself),

\item in the case of a reply to or a forward of a message, the \emph{unaltered} original message is included, 

\item in a reply or a forward, one can append new information to the original message one replies to or forwards.
  
\end{itemize}

To formalize agents' knowledge resulting from an email exchange we
introduce an appropriate epistemic language and the corresponding
semantics.  The resulting model of email communication differs from
the ones that were studied in other papers in which only limited
aspects of emails have been considered. These papers are discussed
below. In our setup the communication is synchronous. While this is a
simplification we find that it is natural to clarify email
communication in such a setting first before considering various
alternatives.  In the last section we address this point further when
suggesting further research.

\subsection{Contributions and plan of the paper}

To study the relevant features of email communication we introduce in
the next section a carefully chosen language describing emails.  We
make a distinction between a \emph{message}, which is sent to a public
recipient list, and an \emph{email}, which consists of a message and a
set of BCC recipients.  This distinction is relevant because a
\emph{forward} email contains an earlier message, without the list of
BCC recipients.  We also introduce the notion of a legal state that
imposes a natural restriction on the considered sets of emails by
stipulating an ordering of the emails. For example, an email
needs to precede any forward of it.

To reason about the knowledge of the agents after an email exchange
has taken place we introduce in Section \ref{sec:epistemic} an
appropriate epistemic language.  Its semantics takes into account the
uncertainty of the recipients of an email about its set of BCC
recipients. This semantics allows us to evaluate
epistemic formulas in legal states, in particular the formulas that
characterize the full knowledge-theoretic effect of an email.

Apart from factual information each email also carries epistemic
information. In Section \ref{sec:EI} we characterize the latter. It
allows us to clarify which groups of agents acquire common knowledge
as a result of an email and what is the resulting information gain
for each agent.

In Section \ref{sec:common} we present the main result of the paper,
that clarifies when a group of agents can acquire
common knowledge of the formula expressing
the fact that an email has been sent.  This characterization in particular 
sheds light on the epistemic consequences of BCC.
The proof is given in Section \ref{sec:proof}.

Then in Section \ref{sec:bcc} we show that in our framework 
BCC cannot be simulated using messages without BCC recipients.
Finally, in Section \ref{sec:dis}
we provide a characterization of legal states in terms of
properly terminating email exchanges. 

\subsection{Related work}

The study of the epistemic effects of communication in distributed
systems originated in the eighties and led to the seminal book
\cite{FHMV_RAK}. The relevant literature, including
\cite{chandy_processes_1986}, deals with the 
communication forms studied within the context of distributed computing, 
notably asynchronous send.

One of the main issues studied in these frameworks has been the
analysis of the conditions that are necessary for acquiring common
knowledge.  In particular, \cite{HM90} showed that common knowledge
cannot be attained in the systems in which the message delivery is not
guaranteed.  More recently this problem was investigated in
\cite{BM10} for synchronous systems with known bounds on message
transmission in which processes share a global clock.  The authors
extended the causality relation of \cite{Lam78} between messages in
distributed systems to synchronous systems with known bounds on
message transmission and proved that in such systems a so-called
pivotal event is needed in order to obtain common knowledge.  This in
particular generalizes the previous result of
\cite{chandy_processes_1986} concerning acquisition of common
knowledge in distributed systems with synchronous communication.

The epistemic effects of other forms of communication were studied in
numerous papers.  In particular, in \cite{pacpar07} the communicative
acts are assumed to consist of an agent~$j$ `reading' an arbitrary
propositional formula from another agent~$i$.  The idea of an
epistemic content of an email is implicitly present in \cite{PR03},
where a formal model is proposed that formalizes how communication
changes the knowledge of a recipient of the message.  

In \cite{vBvEK06} a dynamic epistemic logic modelling effects of
communication and change is introduced and extensively studied.
Further, in \cite{WSE10} an epistemic logic was proposed to reason
about information flow w.r.t.~underlying communication channels.
\cite{Pac10} surveys these and related approaches and discusses the
used epistemic, dynamic epistemic and doxastic logics.

Most related to the work here reported are the following two
references.  \cite{AWZ09} studied knowledge and common knowledge in a
set up in which the agents send and forward propositional formulas in
a social network.  However, the forward did not include the original
message and the BCC feature was absent.  More recently, in
\cite{SvE11} explicit messages are introduced in a dynamic epistemic
logic to analyze a similar setting, though BCC was simulated as
discussed in Section \ref{sec:bcc}.  In both papers it is assumed that
the number of messages is finite. In contrast, in the setting of this
paper the forward includes the original message, which results
directly in an infinite number of messages and emails.  Finally, let
us mention that the concept of forwarding is occasionally mentioned in
the context of distributed computing, see, e.g., \cite{EC09}.

\section{Preliminaries}
\label{sec:preliminaries}

\subsection{Messages}

In this section we define the notion of a message. In the next section
we introduce emails as a simple extension of the messages.  We assume
a non-empty and finite set of \bfe{agents} $\Ag = \{1, ..., n\}$ and a
set of \bfe{notes}.  Each note is an abstraction of the contents of
the message or an email.

We make a number of assumptions. Firstly, we assume that initially
each agent $i$ has a set of notes $L_i$ he knows.  He does know which
notes belong to the other agents and does not know the overall set of
notes.  Furthermore, we assume that the agents only exchange messages
about the notes.  We also assume that an agent can send a message to
other agents containing a note only if he holds it initially or has
learnt it through a message he received earlier.

This minimal set up precludes the possibility that the agents can use messages to
implement some agreed in advance protocol, such as that sending two
specific notes by an agent would reveal that he has some specific
knowledge.  It allows us to focus instead on the epistemic information
caused \emph{directly} by the structure of the messages and emails.

Of course in reality emails may contain propositional or epistemic
information which affects knowledge of the agents at a deeper level
than modelled here by means of abstract notes.  To reason about notes
containing such information one could add on the top of our framework
an appropriate logic.  If every note $n$ contains some formula
$\varphi_n$, then one could just add the implications $n \rightarrow
\varphi_n$ to this logic to ensure that every agent who knows the note
$n$ also knows the formula $\varphi_n$.

We inductively define \bfe{messages} as follows, where we assume that $G \neq \emptyset$:
\begin{itemize}
\item $m := s(i,l,G)$; the message containing note $l$, sent by agent $i$ to the group $G$,
\item $m := f(i,l.m',G)$; the forwarding by agent $i$ of the message $m'$ with added note $l$, sent to the group $G$.
\end{itemize}

So the agents can send a message with a note or forward a message with
a new note appended, where the latter covers the possibility of a 
reply or a reply-all. Appending such a new note to a forwarded message
is a natural feature present in most email systems.
To allow for the possibility of sending a forward without appending 
a new note, we can assume there exists a note \textbf{true} that is held
by all agents and identify $\textbf{true}.m$ with $m$.

If $m$ is a message, then we denote by $S(m)$ and $R(m)$,
respectively, the singleton set consisting of the agent sending $m$
and the group of agents receiving $m$. So for the above messages $m$
we have $S(m) = \{i\}$ and $R(m) = G$.  We do allow that $S(m)
\subseteq R(m)$, i.e., that one sends a message to oneself.

Special forms of the forward messages can be used to model reply
messages.  Given $f(i,l.m,G)$ with $i \in R(m)$, using $G = S(m)$ we obtain the
customary \emph{reply} message and using $G = S(m) \cup R(m)$ we obtain the
customary \emph{reply-all} message.  
(In the customary email systems there is syntactic difference between
a forward and a reply to these two groups of agents, but the effect of
both messages is exactly the same, so we ignore this difference.)  
In the examples we write $s(i,l,j)$ instead of $s(i,l,\{j\})$, etc.

\subsection{Emails}
\label{subsec:emails}

An interesting feature of most email systems is that of the blind carbon copy (BCC).
We  study here the epistemic effects of sending an email with BCC recipients
and will now include this feature in our presentation.

In the previous subsection we defined messages that have a sender and a 
group of recipients. Now we define the notion of an email which
allows the additional possibility of sending a BCC of a message.  The
BCC recipients are not listed in the list of recipients, therefore we
have not included them in the definition of a message. Formally, by an
\bfe{email} we mean a construct of the form $m_B$, where $m$ is a
message and $B \subseteq \Ag$ is a possibly empty set of BCC recipients.
Given a message $m$ we call each email $m_B$ a \bfe{full version} of
$m$.

An email $m_B$ is delivered to the regular recipients, i.e., to the set $R(m)$
and to the set $B$ of the BCC recipients. Each of them receives the message
$m$. Only the sender of $m_B$, i.e., agent $i$, where $S(m) = \{i\}$,
knows the set $B$. Each agent $i \in B$ only knows that the set $B$
contains at least him.

Since the set of the BCC recipients is `secret', it does not appear in
a forward. That is, the forward of an email $m_B$ with added note $l$
is the message $f(i, l.m, G)$ or an email $f(i, l.m, G)_C$, in which
$B$ \emph{is not mentioned}.  This is consistent with the way BCC is
handled in most email systems, such as \texttt{gmail} or email systems
based on the \texttt{postfix} mail server.  However, this forward may
be sent not only by a sender or a regular recipient of $m_B$, but also
by a BCC recipient.  Clearly, the fact that an agent was a BCC
recipient of an email is revealed at the moment he forwards the
message.

A natural question arises: what if someone is both a regular recipient
and a BCC recipient of an email? In this case, no one (not even this
BCC recipient himself) would ever notice that this recipient was also
a BCC recipient since everyone can explain his knowledge of the
message by the fact that he was a regular recipient. Only the sender
of the message would know that this agent was also a BCC recipient.
This fact does not change anything and hence 
we assume that for every email $m_B$ we have $(S(m) \cup R(m)) \cap B = \emptyset$.

\begin{example}
Using the just introduced language we can formalize the story from
Example \ref{exa:1} as follows, where we abbreviate Alma to $a$, etc.:
\begin{itemize}
\item Alma and Daniel got an email from Clare:

$e_0 := m_{\emptyset}$, where $m := s(c, l, \{a,d\})$,

\item Alma forwarded it to Bob:

$e_1 := m'_{\emptyset}$, where $m' := f(a, m, b)$, 

\item Bob forwarded Alma's email to Clare and Daniel with a BCC to Alma:

$e_2 := m''_{\{a\}}$, where $m'' := f(b, m', \{c, d\})$,  

\item Alma forwarded the last email to Clare and Daniel with a BCC to Bob:

$e_3 := f(a, m'', \{c, d\})_{\{b\}}$.

\end{itemize}
\end{example}

\subsection{Legal states}
\label{subsec:legalstates}

Our goal is to analyze knowledge of agents after some email exchange took place.
To this end we need to define a possible collection of sent
emails.

First of all, we shall assume that every message is used only once.
In other words, for each message $m$ there is at most one full version
of $m$, i.e., an email of the form $m_B$.  The rationale behind this
decision is that a sender of $m_B$ and $m_{B'}$ might equally well
send a single email $m_{B \cup B'}$.  This assumption can be
summarized as a statement that the agents do not have `second
thoughts' about the recipients of their emails. It also simplifies
subsequent considerations.

In this work we have decided not to impose a total ordering on the
emails in our model, for example by giving each email a time stamp.
This makes the model a lot simpler. Also, many interesting questions
can be answered without imposing such a total ordering.
For example, we can investigate the existence of common knowledge
in a group of agents after an email exchange perfectly well without
knowing the exact order of the emails that were sent.

However, we have to impose some ordering on the sets of emails.
For example, we need to make sure that the agents only send information
they actually know. Moreover, a forward can only be sent after the
original email was sent.
We will introduce the minimal partial ordering that takes care of such issues.

First, we define by structural induction the \bfe{factual information} $FI(m)$ contained
in a message $m$ as follows:
\begin{eqnarray*}
FI(s(i,l,G))   &:=& \{l\},\\
FI(f(i,l.m,G)) &:=& FI(m) \cup \{l\}.
\end{eqnarray*}
Informally, the factual information is the set of notes which occur somewhere
in the message, including those occurring in forwarded messages.

We will use the concept of a \bfe{state} to model the effect of an email exchange.
A state $s = (E, \L)$ is a tuple consisting of a finite set $E$ of emails that were sent
and a sequence $\L = (L_1, ..., L_n)$ of sets of notes for all agents.
The idea of these sets is that each agent $i$ initially holds the notes in $L_i$.
We use $E_s$ and $\L_s$ to denote the corresponding elements of a state $s$,
and $L_1, ..., L_n$ to denote the elements of $\L$.

We say that a state $s = (E,\L)$ is \bfe{legal} if a strict
partial ordering (in short, an spo) $\prec$ on $E$ exists that satisfies the following conditions:

\begin{enumerate}[L.1:]
\item\label{leg1} for each email $f(i,l.m,G)_B \in E$ an email $m_C \in E$ exists such that 
$m_C \prec f(i,l.m,G)_B$ and $i \in S(m) \cup R(m) \cup C$,

\item\label{leg2} for each email $s(i,l,G)_B \in E$, where $l \not\in L_i$,
 an email $m_C \in E$ exists such that $m_C \prec s(i,l,G)_B$, $i \in R(m) \cup C$ and $l \in FI(m)$,

\item\label{leg3} for each email $f(i,l.m',G)_B \in E$, where $l \not\in L_i$, an email $m_C \in E$ exists such that
  $m_C \prec f(i,l.m',G)_B$, $i \in R(m) \cup C$ and $l \in FI(m)$.
\end{enumerate}

Condition L.\ref{leg1} states that the agents can only forward messages 
they previously received.
Conditions L.\ref{leg2} and L.\ref{leg3} state that if an agent sends or forwards
a note that he did not initially hold, then he 
must have learnt it by means of an earlier email.  

So a state is legal if its emails can be partially ordered in such a
way that every forward is preceded by its original message, and for
every note sent in an email there is an explanation how the sender of
the email learnt this note.  As every partial ordering can be extended
to a linear ordering, the emails of a legal state can be ordered in
such a way that each agent has a linear ordering on its emails.
However, such a linear ordering does not need to be unique. For
example, the emails $s(i,l,j)_{\emptyset}$ and $s(i,l,k)_{\emptyset}$ can always
be ordered in both ways.

Moreover, a strict partial ordering that ensures that a state is legal
does not need to be unique either and incompatible minimal partial
orderings can exist.  Here is an example provided by one of the
referees.  Suppose that $l \in L_i \setminus L_j$ and $j \in G_1 \cap
G_2$, and consider the set of messages $\{s(i, l, G_1), \ s(i, l,
G_2), \ s(j, l, k)\}$. The resulting state (we identify here each
message $m$ with the email $m_{\emptyset}$) is legal. There are two minimal
spos that can be used to establish this, $s(i, l, G_1) \prec s(j, l,
k)$ and $s(i, l, G_2) \prec s(j, l, k)$.  So we cannot conclude that
any specific message sent by agent $i$ has to precede the message sent
by agent $j$, though we have to assume that at least one of them does.

This shows that the causal relation between emails essentially differs
from the causal relation between messages in distributed systems, as
studied in~\cite{Lam78}. Further, the assumption that communication is
synchronous does not result in a unique spo on the
considered emails. 

\section{Epistemic language and its semantics}
\label{sec:epistemic}

We want to reason about the knowledge of the agents after an email
exchange has taken place. For this purpose we use a language $\Lang$ of
communication and knowledge defined as follows:
\begin{eqnarray*}
\varphi &::=& m \mid i \blacktriangleleft m \mid \neg\varphi \mid \varphi \land \varphi \mid C_G \varphi
\end{eqnarray*}
Here $m$ denotes a message.
The formula $m$ expresses the fact that $m$ has been sent in the past,
with some unknown group of BCC recipients.
The formula $i \blacktriangleleft m$
expresses the fact that agent $i$ was involved in a full version of the message
$m$, i.e., he was either the sender, a recipient or a BCC recipient.
The formula $C_G \varphi$ denotes common knowledge of the formula
$\varphi$ in the group $G$.

We use the usual abbreviations $\lor$, $\rightarrow$ and $\leftrightarrow$ and use
$K_i \varphi$ as an abbreviation of $C_{\{i\}}\varphi$.
The fact that an email with a certain set of BCC recipients was sent
can be expressed in our language by the following abbreviation:
\[m_B ::= m \land \bigwedge_{i \in S(m) \cup R(m) \cup B} i \blacktriangleleft m \land
\bigwedge_{i \not\in S(m) \cup R(m) \cup B} \neg i \blacktriangleleft m\]
This formula expresses the fact that the message $m$ was
sent with exactly the group $B$ as BCC recipients, which captures
precisely the intended meaning of $m_B$.
The BCC recipients are distinguished from the regular recipients in $R(m)$ by the fact
that for any agent $i$ in $S(m)$ and $R(m)$, the fact that $i \blacktriangleleft m$ holds
follows from the fact that $m$ holds. On the other hand, for the agents in 
$B$, the fact that $i \blacktriangleleft m$ holds follows from $m_B$ and not from $m$ alone.

We now provide a semantics for this language interpreted on legal states, inspired by the 
epistemic logic and the history-based approaches of \cite{pacpar07}
and \cite{PR03}. For every agent $i$ we define an
indistinguishability relation $\sim_i$, where we intend $s \sim_i s'$
to mean that agent $i$ cannot distinguish between the states $s$
and $s'$. We first define this relation on the level of emails as
follows (recall that we assume that senders and regular recipients are not
BCC recipients):
\[m_B \sim_i m'_{B'}\]
iff one of the following contingencies holds:

\begin{enumerate}[(i)]
\item $i \in S(m)$, $m = m'$ and $B = B'$,

\item $i \in R(m) \setminus S(m)$ and $m = m'$,

\item $i \in B \cap B'$, and $m = m'$.
\end{enumerate}

Condition (i) states that the sender of an email confuses it only with the email itself.
In turn, condition (ii) states that each regular recipient of an email who is not
a sender confuses it with any email with the same message but possibly sent to
a different BCC group. Finally, condition (iii) states that each BCC recipient of an
email confuses it with any email with the same message but sent to a
possibly different BCC group of which he is also a member.  

\begin{example}
Consider the emails $e := s(i, l, j)_\emptyset$ and $e':= s(i,
l, j)_{\{k\}}$.  We have then $e \not\sim_i e'$, $e \sim_j e'$ and $e
\not\sim_k e'$. Intuitively, agent $j$ cannot distinguish between
these two emails because he cannot see whether $k$ is a BCC 
recipient. In contrast, agents $i$ and $k$ can distinguish between these two emails.
\end{example}

Next, we extend the indistinguishability relation to legal states by defining
\[(E, \L) \sim_i (E', \L')\]
iff all of the following hold:

\begin{itemize}
\item $L_i = L'_i$,
\item for every $m_B \in E$ such that $i \in S(m) \cup R(m) \cup B$ 
there is $m_{B'} \in E'$ such that $m_B \sim_i m_{B'}$,
\item for every $m_{B'} \in E'$ such that $i \in S(m) \cup R(m) \cup B$ 
there is $m_{B} \in E$ such that $m_B \sim_i m_{B'}$.
\end{itemize}

So two states cannot be distinguished by an agent if they agree on
his notes and their email sets look the same to him.  Since we assume
that the agents do not know anything about the other notes, we do not
refer to the sets of notes of the other agents.  Note that $\sim_i$ is
an equivalence relation.

\begin{example} \label{exa:3}
Consider the legal states
$s_1$ and $s_2$ which are identical apart from their sets of emails:
\[
\begin{array}{rcl}
E_{s_1} &:=& \{s(i, l, j)_\emptyset, f(j, s(i, l, j), o)_\emptyset\},\\
E_{s_2} &:=& \{s(i, l, j)_{\{k\}}, f(j, s(i, l, j), o)_\emptyset, f(k, s(i, l, j), o)_\emptyset\}. \\
\end{array}
\]

We assume here that $l \in L_i$ and that in each state
the emails are ordered by the textual ordering. 
So in the first state agent $i$ sends a message with note $l$ to agent $j$ and 
then $j$ forwards this message to agent $o$. Further, in the second
state agent $i$ sends the same message but with a BCC to agent $k$,
and then both agent $j$ and agent $k$ forward the message to agent
$o$.

From the above definition it follows that $s_1 \not\sim_i s_2$, $s_1
\sim_j s_2$, $s_1 \not\sim_k s_2$ and $s_1 \not\sim_o s_2$. For
example, the first claim holds because, as noticed above, $s(i, l,
j)_\emptyset \not\sim_i s(i, l, j)_{\{k\}}$.
Intuitively, in state $s_1$
agent $i$ is aware that he sent a BCC to nobody, while in state $s_2$
he is aware that he sent a BCC to agent $k$. In turn, in both states
$s_1$ and $s_2$ agent $j$ is aware that he received the message
$s(i, l, j)$ and that he forwarded the email $f(j,s(i, l, j),o)_\emptyset$.
Intuitively, in state $s_2$ agent $j$ does not notice the BCC of the message $s(i, l,
j)$ and is not aware of the email $f(k, s(i, l, j), o)_\emptyset$.
\end{example}

In order to express common knowledge, we define for a group of agents
$G$ the relation $\sim_G$ as the reflexive, transitive closure of
$\bigcup_{i \in G} \sim_i$.
Then we define the truth of a formula from our language in a state inductively as
follows, where $s = (E, \L)$:
\[\begin{array}{lll}
s \models m &\textrm{iff}&\exists B: m_B \in E\\
s \models i \blacktriangleleft m &\textrm{iff}&\exists B: m_B \in E  \textrm{ and } i \in S(m) \cup R(m) \cup B\\
s \models \neg\varphi  &\textrm{iff}&s \not\models \varphi\\
s \models \varphi \wedge \psi  &\textrm{iff}&s \models \varphi \textrm{ and } s \models \psi\\
s \models C_G \varphi  &\textrm{iff}&\textrm{$s' \models \varphi$ for every legal state $s'$ such that $s \sim_{G} s'$}\\
\end{array}\]

We say that $\varphi$ is \bfe{valid} (and often just write `$\varphi$'
instead of `$\varphi$ is valid') if for all legal states $s$, $s
\models \varphi$.  

Even though this definition does not specify the form of communication, one can deduce from
the definition of the relation $\sim$  that
the communication is synchronous, that is, that each email is simultaneously received by all the recipients.
We shall discuss this matter in more detail in Section~\ref{sec:dis}.
Note also that the condition of the form $m_B \in E$ present in the second clause
implies that for every email $m_B$ 
the following equivalence is valid for all $i,j \in S(m) \cup R(m) \cup B$:
\[
i \blacktriangleleft m \leftrightarrow j \blacktriangleleft m.
\]
This means that in every legal state $(E, \L)$ either all recipients of the email $m_B$
received it (when $m_B \in E$) or none (when $m_B \not\in E$).

The limited form of the introduced language implies
that the agents cannot simulate a common `clock' using which they
could deduce how many messages have been sent. Also, there is no
common `blackboard' using which they could deduce how many messages
have been sent by other agents between two consecutive messages they
have received. Further, the agents do not have a local `clock' using which
they could count how many messages they sent or received.

The following lemma clarifies when specific formulas are valid. In the
sequel we shall use these observations implicitly.
Below we use the relation \bfe{is part of} on messages, defined inductively as follows:

\begin{itemize}
\item  $m$ is part $f(i, l.m, G)$,

\item  if $m'$ is part of $m$, then $m'$ is part $f(i, l.m, G)$.
\end{itemize}

\begin{lemma} \label{lem:1}
 \mbox{}
 \begin{enumerate}[(i)]
 \item $m \rightarrow m'$ is valid iff $m = m'$ or $m'$ is part of the message $m$.

 \item $m \rightarrow i \blacktriangleleft m'$ is valid iff $i \in S(m') \cup R(m')$
or for some note $l$ and group $G$, $f(i, l.m', G)$ is part of the message $m$.
 \end{enumerate}
\end{lemma}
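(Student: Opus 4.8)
The plan is to prove both equivalences by structural induction on the message $m$, using the semantics of $\models$ together with the definitions of "is part of" and the legality conditions. The key observation is that validity of $m \rightarrow \varphi$ means: in every legal state $s$ with $m_B \in E_s$ for some $B$, we have $s \models \varphi$. So the "if" directions should be relatively direct consequences of the legality conditions L.1--L.3 (a sent message drags its subparts along), while the "only if" directions require constructing, for a message $m$ and a candidate $m'$ (resp. agent $i$ and message $m'$) that does \emph{not} satisfy the stated syntactic condition, a legal state in which $m$ was sent but $m'$ was not (resp. $i \blacktriangleleft m'$ fails).

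For part (i), "if": if $m = m'$ this is trivial; if $m'$ is part of $m$, then I would argue by induction on the "is part of" derivation that in any legal state containing a full version of $m$, condition L.\ref{leg1} forces a full version of the immediate sub-message to be present, and iterating this down the forward-nesting yields a full version of $m'$, so $s \models m'$. For "only if": suppose $m'$ is neither equal to $m$ nor part of $m$. I would exhibit the singleton legal state $s = (\{m_\emptyset\}, \L)$ where $\L$ gives each agent whatever notes are needed to make $s$ legal --- concretely, put every note occurring in $m$ into $L_i$ for $i = S(m)$ and all senders of sub-forwards, so that L.\ref{leg2} and L.\ref{leg3} are vacuous and L.\ref{leg1} is satisfied internally by the nesting of $m$ itself. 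Then $s \models m$ but $s \not\models m'$ since $m'_{B'} \notin \{m_\emptyset\}$ for any $B'$ (as $m' \neq m$), giving a counterexample to validity. One subtlety: I must check that a single full version $m_\emptyset$ of $m$ automatically provides, via the nesting, full versions of all \emph{proper} parts of $m$ as well --- but "is part of" refers to sub-messages literally occurring inside $m$, and a legal state need only \emph{contain} $m_\emptyset$ with an spo witnessing legality; the sub-messages of $m$ need not be separately present as emails in $E_s$. This is the point to be careful about: L.\ref{leg1} demands a \emph{separate} email $m'_C \in E$ for each forward $f(i,l.m',G)_B \in E$. So the genuinely minimal legal state containing $f(i,l.m,G)_\emptyset$ must also contain some full version of $m$, and recursively all its parts. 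Hence the correct counterexample state is: take the least set of emails closed under "include a full version (with empty BCC) of the immediate sub-message of each forward present", starting from $m_\emptyset$; this set is exactly $\{ m'_\emptyset : m' = m \text{ or } m' \text{ is part of } m\}$, and with suitable $\L$ it is legal. Then $s \models m$, and $s \not\models m'$ precisely when $m' \neq m$ and $m'$ is not part of $m$.

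For part (ii), the structure is parallel. "If": if $i \in S(m') \cup R(m')$ then whenever a full version $m'_{B}$ of $m'$ is present (which, by part (i)'s argument, it is whenever $m_B \in E_s$, since $m'$ is part of $m$ or equals $m$... wait --- here $m'$ need not be part of $m$; rather the hypothesis is $f(i,l.m',G)$ is part of $m$, which is a message \emph{having} $m'$ as its immediate sub-message, and that message is part of $m$). So: if $f(i,l.m',G)$ is part of $m$ (or equals $m$), then by the part-(i) reasoning a full version of $f(i,l.m',G)$ is in $E_s$, and then L.\ref{leg1} forces a full version $m'_C$ of $m'$ into $E_s$; since $i$ is the sender of that forward, $i \in S(m'_?) \cup \dots$ --- more directly, $m'_C \in E_s$ gives $s \models i \blacktriangleleft m'$ once $i \in S(m') \cup R(m') \cup C$, and indeed $i \in R(m')$ is forced because $i$ forwarded $m'$ so L.\ref{leg1}'s clause "$i \in S(m) \cup R(m) \cup C$" applies --- hmm, that only says $i$ was \emph{some} participant of $m'_C$, which is exactly $i \blacktriangleleft m'$. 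Good. If instead $i \in S(m') \cup R(m')$ outright, then any full version $m'_B \in E_s$ directly gives $s \models i \blacktriangleleft m'$. "Only if": if $i \notin S(m') \cup R(m')$ and no $f(i,l.m',G)$ is part of $m$, build the minimal legal state containing $m_\emptyset$ as in part (i); its emails are full versions $m''_\emptyset$ of $m$ and its parts, all with \emph{empty} BCC sets, and none of these is a forward by $i$ of $m'$, so $i$ is never a participant of any full version of $m'$ present; hence $s \not\models i \blacktriangleleft m'$ while $s \models m$.

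**The main obstacle** I anticipate is pinning down exactly which emails the "minimal legal state containing $m_\emptyset$" must contain, and verifying it is genuinely legal --- i.e. that closing under the L.\ref{leg1} requirement for sub-forwards terminates with precisely $\{m'_\emptyset : m' = m \text{ or } m' \text{ is part of } m\}$ and that one can choose $\L$ (giving each relevant sender all notes appearing in $m$) so that L.\ref{leg2} and L.\ref{leg3} become vacuous, with $\prec$ the nesting order. Everything else is routine unwinding of the semantics; the care needed is purely in the counterexample construction, and in not conflating "$m'$ is part of $m$" with "$f(i,l.m',G)$ is part of $m$" in part (ii).
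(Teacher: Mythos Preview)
The paper does not give a proof of this lemma (it only adds a one-sentence informal gloss), so there is nothing to compare against; your outline is the right shape, but your counterexample construction has a real gap.

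Your minimal legal state for the ``only if'' directions takes $E_s = \{\, m''_{\emptyset} : m'' = m \text{ or } m'' \text{ is part of } m \,\}$ with \emph{empty} BCC sets throughout. This need not satisfy L.\ref{leg1}. Suppose $m = f(j, l.m'', G)$ with $j \notin S(m'') \cup R(m'')$ (the syntax of messages does not forbid this; it is exactly the case where $j$ forwarded a message he received only as a BCC recipient). Then L.\ref{leg1} demands some $m''_C \in E_s$ with $j \in S(m'') \cup R(m'') \cup C$, which fails for $C = \emptyset$. The fix is to let each sub-message $m''$ carry as BCC the (unique) agent who forwards it in the chain, whenever that agent is not already in $S(m'') \cup R(m'')$; with the spo given by nesting depth this is legal. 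For part~(ii) this repaired construction still does what you want: the hypothesis that no $f(i, l.m', G)$ is part of $m$ (nor equals $m$) guarantees that the forwarder of $m'$ in the chain is \emph{not} $i$, so $i$ is never forced into the BCC set of $m'$, and $s \not\models i \blacktriangleleft m'$ follows.

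A second, smaller point you already half-noticed: in part~(ii) the case $m = f(i, l.m', G)$ with $i \notin S(m') \cup R(m')$ makes $m \rightarrow i \blacktriangleleft m'$ valid (by L.\ref{leg1}), yet ``$f(i, l.m', G)$ is part of $m$'' is literally false since the part-of relation is irreflexive. This is a wrinkle in the lemma's wording rather than in your argument; just state explicitly that you read the second disjunct as ``$f(i,l.m',G)$ is part of $m$ or equals $m$'', which is clearly the intended meaning.
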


The second item states that $m \rightarrow i \blacktriangleleft m'$ is valid either
if $i$ is a sender or a receiver of $m'$ (in that case actually $i \blacktriangleleft m'$ is valid)
or $i$ forwarded the message $m'$. The latter is also possible if $i$ was a BCC receiver of $m'$. 
The claimed equivalence holds thanks to condition L.\ref{leg1}.

\begin{example}
To illustrate the definition of truth let us return to Example \ref{exa:3}.
In state $s_2$ agent $j$ does not know that agent $k$ received the
message $s(i, l, j)$ since he cannot distinguish $s_2$ from the state
$s_1$ in which agent $k$ did not receive this message.  So $s_2
\models \neg K_j k \blacktriangleleft s(i, l, j)$ holds. 

On the other hand, in every legal state $s_3$ such that $s_2 \sim_o
s_3$ both an email $f(k, s(i, l, j), o)_C$ and 
a `justifying' email $s(i, l, j)_B$ have to exist such that $s(i,
p, j)_B \prec f(k, s(i, l, j), o)_C$ and $k \in B$,
where $\prec$ is an spo such that 
the emails of $s_3$ satisfy conditions L.\ref{leg1}-L.\ref{leg3} w.r.t.
$\prec$.
Consequently
$s_3 \models k \blacktriangleleft s(i, l, j)$, so $s_2 \models K_o k
\blacktriangleleft s(i, l, j)$ holds, so by sending the forward agent
$k$ revealed himself to $o$ as a BCC recipient.

We leave to the reader checking that both $s_2 \models C_{\{k,o\}} 
k \blacktriangleleft s(i, l, j)$ and $s_2 \models
\neg C_{\{j,o\}} k \blacktriangleleft s(i, l, j)$ holds. In words,
agents $k$ and $o$ have common knowledge that agent $k$ was involved
in a full version of the message $s(i,l,j)$, while the agents $j$ and
$o$ don't.
\end{example}

\section{Epistemic contents of emails}
\label{sec:EI}

In Subsection \ref{subsec:legalstates} we defined the factual
information contained in a message.  Using the epistemic language
introduced in the previous section we can define the \bfe{epistemic 
information} contained in a message or an email.  First, we define
it for messages as follows:
\begin{eqnarray*}
EI(s(i, l, G)) &:=& C_{\{i\} \cup G} s(i, l, G),\\
EI(f(i, l.m, G)) &:=& C_{\{i\} \cup G} (f(i, l.m, G) \land EI(m)).
\end{eqnarray*}

So the epistemic information contained in a message is the statement
that the sender and receivers acquire common knowledge of the fact
that the message was sent.  In the case of a forward the epistemic
information contained in the original message also becomes common
knowledge.  This results in nested common knowledge. In general,
iterated forwards can lead to arbitary nestings of the common
knowledge operator, each time involving a different group of agents.

The definition of the epistemic information contained in an email
additionally needs to capture the information about the agents who are
on the BCC list of an email. We define: 
\[
EI(m_B) :=  EI(m) \land \bigwedge_{i \in B}C_{S(m) \cup \{i\}} (EI(m) \land i \blacktriangleleft m) \land C_{S(m)} m_B.
\]
So $EI(m_B)$ states that
\begin{itemize}
 \item the epistemic information contained in the message $m$ holds,
 \item the sender of the message and each separate agent on the BCC
   list have common knowledge of this epistemic information and of the
   fact that this agent received the message,
 \item the sender knows the precise set of BCC recipients.
\end{itemize}

The following result clarifies the nature of the epistemic information contained in a message or an email.

\begin{theorem} \label{thm:epis}
The following equivalences are valid:
\begin{enumerate}[(i)]
  \item $m \leftrightarrow EI(m)$,

  \item $m_B \leftrightarrow EI(m_B)$.
\end{enumerate}
\end{theorem}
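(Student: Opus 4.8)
The plan is to prove both equivalences by showing the two directions separately, using the semantics of the epistemic operators and the legal-state conditions L.1--L.3.

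\medskip

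\noindent\textbf{Proof strategy for (i).} For the direction $EI(m) \rightarrow m$, I would argue by induction on the structure of $m$, using the trivial observation that $C_G\varphi \rightarrow \varphi$ is valid (since $\sim_G$ is reflexive), so $EI(m)$ immediately entails $m$ (and, in the forward case, also $EI(m')$ for the forwarded submessage $m'$). For the harder direction $m \rightarrow EI(m)$, I would again induct on the structure of $m$. Suppose $s \models m$, so some full version $m_B \in E_s$. I need to show $s \models C_{\{i\}\cup G}\,\varphi$ where $\varphi$ is $s(i,l,G)$ (base case) or $f(i,l.m',G) \land EI(m')$ (induction step). This means showing that for \emph{every} legal state $s'$ with $s \sim_{\{i\}\cup G} s'$, we have $s' \models \varphi$. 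The key lemma to establish here is that $\sim_{\{i\}\cup G}$-reachability preserves the presence of (a full version of) $m$: if $s \sim_j s'$ for some $j \in \{i\}\cup G$ and $m$ has a full version in $E_s$, then $m$ has a full version in $E_{s'}$ --- this follows directly from the definition of $\sim_j$ on states, since $j \in S(m)\cup R(m)$. By transitivity this extends to the whole group relation $\sim_{\{i\}\cup G}$. In the forward case, once I know $s' \models f(i,l.m',G)$, condition L.1 guarantees the forwarded email has a full version somewhere before it, hence $s' \models m'$, and then by the induction hypothesis applied within $s'$ I get $s' \models EI(m')$.

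\medskip

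\noindent\textbf{Proof strategy for (ii).} Here I unfold $EI(m_B) = EI(m) \land \bigwedge_{i\in B} C_{S(m)\cup\{i\}}(EI(m)\land i\blacktriangleleft m) \land C_{S(m)}m_B$, and I recall that $m_B$ (as a formula) abbreviates $m \land \bigwedge_{i \in S(m)\cup R(m)\cup B} i\blacktriangleleft m \land \bigwedge_{i\notin S(m)\cup R(m)\cup B}\neg i\blacktriangleleft m$. For $EI(m_B) \rightarrow m_B$: the conjunct $C_{S(m)}m_B$ already gives $m_B$ by reflexivity. The real content is the other direction, $m_B \rightarrow EI(m_B)$. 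Assume $s \models m_B$, so the email $m_B$ itself (that exact BCC set) is in $E_s$. I get $s \models EI(m)$ from part (i). For $s \models C_{S(m)}m_B$: the sender $i$ (with $S(m)=\{i\}$) distinguishes $m_B$ from any other full version by clause (i) of the email-level $\sim_i$, so along any $\sim_i$-chain the exact email $m_B$ persists, and moreover no \emph{other} full version of $m$ can appear (uniqueness of full versions, from the legal-state setup), so every $\sim_{S(m)}$-reachable legal state still models $m_B$. For the middle conjunct, fix $j \in B$ and consider the group $S(m)\cup\{j\}$; I must show every $\sim_{S(m)\cup\{j\}}$-reachable legal state models $EI(m) \land j\blacktriangleleft m$. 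The point is that both $i$ (via clause (i)) and $j$ (via clause (iii), since $j \in B\cap B'$) preserve the property ``some full version $m_{B'}$ of $m$ with $j \in B'$ is present'', which gives both $m$ (hence $EI(m)$ by part (i)) and $j \blacktriangleleft m$ in the reachable state.

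\medskip

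\noindent\textbf{Main obstacle.} The crux in both parts is the persistence argument: carefully verifying that walking along a $\sim_G$-chain of legal states preserves exactly the right amount of information --- for regular recipients and senders the message $m$ (and for the sender, the precise BCC set), for BCC recipients $j$ the weaker fact that $j$ is on \emph{some} BCC list. This requires a clean auxiliary lemma of the form ``if $s \sim_i s'$ and $i$ relates to $m$ in manner $X$ in $s$, then $s'$ contains a full version of $m$ in which $i$ relates to $m$ in manner $X$'' for each of the three roles, then closing under reflexive-transitive closure. A secondary subtlety is the forward case of (i), where one must invoke L.1 inside the reached state $s'$ to recover the forwarded message $m'$ before applying the induction hypothesis; one should check the induction hypothesis is being applied to $s'$ (an arbitrary reachable legal state) and not to the original $s$. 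Everything else is routine unfolding of the semantics and of the abbreviations.
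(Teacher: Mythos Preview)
Your proposal is correct and follows essentially the same route as the paper's proof: both directions of each equivalence are handled as you describe, with the easy direction coming from reflexivity of $\sim_G$ and the substantive direction coming from a persistence argument along $\sim_G$-chains (using clauses (i)--(iii) of the email-level $\sim$ relation), combined with structural induction on $m$ for part~(i). Your identification of the correct invariant in the BCC case (``some full version $m_{B'}$ with $j\in B'$ is present'') and your explicit mention of the uniqueness-of-full-versions assumption for the $C_{S(m)}m_B$ conjunct are, if anything, slightly more careful than the paper's own wording.
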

\begin{proof}
Each relation $\sim_j$ on the level of states is an equivalence relation,
so for all formulas $\varphi$ and $G \subseteq \Ag$, the implication $C_G \varphi \rightarrow \varphi$,
and hence in particular $EI(m) \rightarrow m$ and $EI(m_B) \rightarrow m_B$, is valid.

$(i)$ To prove the validity of $m \rightarrow EI(m)$ take a
message $m$. Let $A = S(m) \cup R(m)$.  Consider an arbitrary legal
state $s$ and assume that $s \models m$.  Suppose $s \sim_A s'$ for
some legal state $s'$.  Then there is a path $s = s_0 \sim_{i_1} s_1
\sim_{i_2} \ldots \sim_{i_l} s_l = s'$ from $s$ to $s'$, where $i_1, \ldots,
i_l \in A$ and $l \geq 0$.

For every $k \in \{1, \ldots, l\}$ suppose $s_k = (E_k, \L_k)$.
Then for every $k \in \{1, \ldots, l\}$, $s_{k-1} \models m$ implies
that for some $B$, $m_B \in E_{k-1}$.  Now, since $i_k \in S(m) \cup
R(m)$, by the clauses $(i)$ and $(ii)$ of the definition of the
$\sim_{i_k}$ relation on the emails for some group $B'$ we have $m_{B'} \in E_k$,
which implies $s_k \models m$.  Since $s \models m$, an inductive
argument shows that $s' \models m$. This proves that $s \models C_A
m$.  So we established the validity of the implication
\[
m \rightarrow C_A m,
\]
and in particular of $s(i, l, G) \rightarrow EI(s(i, l, G))$.

For the forward messages we proceed by induction on the
structure of the messages.  Consider the message $f(i, l.m, G)$.  
The implication $f(i, l.m, G) \rightarrow m$ is valid, so by
the induction hypothesis the
implication $f(i, l.m, G) \rightarrow EI(m)$ is valid.
Since we showed already that the implication 
$f(i, l.m, G) \rightarrow C_{\{i\} \cup G} f(i, l.m, G)$ is valid, we conclude that the implication
$f(i, l.m, G) \rightarrow C_{\{i\} \cup G} (f(i, l.m, G) \wedge EI(m))$ is also valid.

$(ii)$
We already established the validity of $m \rightarrow EI(m)$. Then by the
definition of $m_B$ the implication $m_B \rightarrow EI(m)$ is
also valid.

Let $i \in B$. Consider an arbitrary legal state $s$ and assume that
$s \models m_B$.  Suppose $s \sim_{S(m) \cup \{i\}} s'$ for some legal
state $s'$.  Then there is a path $s = s_0 \sim_{i_1} s_1 \sim_{i_2}
\ldots \sim_{i_l} s_l = s'$ from $s$ to $s'$, where $i_1, \ldots, i_l \in
S(m) \cup \{i\}$ and $l \geq 0$.

For every $k \in \{1, \ldots, l\}$ suppose $s_k = (E_k, \L_k)$. 
Then for every $k \in \{1, \ldots, l\}$, 
$s_{k-1} \models m_B$ implies that $m_B \in E_{k-1}$ and then by the
definition of $\sim_k$, $m_{B'} \in E_k$ for some $B'$ such that $i \in
B'$. This means that $s_k \models i \blacktriangleleft m$ and $s_k \models
m$ which implies by $(i)$  $s_k \models EI(m)$. Since $s \models m_B$ an
inductive argument then shows that $s' \models EI(m) \land i
\blacktriangleleft m$.  So $s \models C_{S(m) \cup \{i\}} (EI(m) \land i
\blacktriangleleft m)$.  

Finally, suppose that $s \sim_{j} s'$, where $S(m) = \{j\}$, and $s \models m_B$.
By the definition of the $\sim_{j}$ relation
on the level of states $m_B \in E_{s'}$ so $s' \models m_B$.
This proves $s \models C_{S(m)} m_B$.

We conclude that the implication $m_B \rightarrow EI(m_B)$ is valid. Trivially, $EI(m_B) \rightarrow m_B$ is also valid.
\end{proof}

Using the above theorem we can determine
`who knows what' after an email exchange $E$ (taken from a legal state $(E, L)$)
took place.
The problem boils down to computing $\bigwedge_{e \in E} EI(e)$.
When we are interested in a specific fact, for example whether after an email exchange
$E$ took place agent $i$ knows a formula $\psi$, we simply need to
establish the validity of the implication $\bigwedge_{e\in E} EI(e)
\rightarrow C_i \psi$.

Using the epistemic information contained in an email we can define the \bfe{information gain}
of an agent resulting from sending or receiving of an email as follows. Suppose $i \in S(m) \cup R(m) \cup B$. Then

\[
IG(m_B, i) := \left\{
\begin{array}{ll}
EI(m_B) & \textrm{if } S(m) = \{i\} \\
EI(m)   & \textrm{if }  i \in R(m) \\
C_{S(m) \cup \{i\}} (EI(m) \land i \blacktriangleleft m) & \textrm{if } i \in B
\end{array} \right.
\]

We have then the following immediate consequence
of Theorem \ref{thm:epis}.

\begin{corollary}
Take a legal state $s = (E,\L)$ and an email $m_B \in E$.
Then for every agent $i \in S(m) \cup R(m) \cup B$,
\[
s \models K_i IG(m_B, i).
\]
\end{corollary}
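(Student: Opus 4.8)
The plan is to derive the corollary directly from Theorem~\ref{thm:epis} by inspecting the three cases in the definition of $IG(m_B, i)$ and observing that in each case the formula is of the form $C_H\,\psi$ for a group $H$ containing $i$, and that $s \models \psi$ follows from $s \models m_B$ together with Theorem~\ref{thm:epis}(ii). First I would note the following general fact: since each $\sim_j$ on states is an equivalence relation, $\sim_H$ (the reflexive transitive closure of $\bigcup_{j \in H}\sim_j$) is also an equivalence relation, and in particular $\sim_{\{i\}} = \sim_i \subseteq \sim_H$ whenever $i \in H$. Hence for any formula $\psi$, the implication $C_H\,\psi \rightarrow C_i\,\psi$ (i.e. $C_H\,\psi \rightarrow K_i\,\psi$) is valid whenever $i \in H$.

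Next I would fix a legal state $s = (E,\L)$ and an email $m_B \in E$, and take $i \in S(m)\cup R(m)\cup B$. By the definition of truth, $s \models m_B$ (reading $m_B$ as the abbreviating formula), because all the conjuncts $j \blacktriangleleft m$ for $j \in S(m)\cup R(m)\cup B$ and $\neg j \blacktriangleleft m$ for the remaining agents hold in $s$ precisely because $m_B \in E$ and we assumed $(S(m)\cup R(m))\cap B = \emptyset$ and that every message has at most one full version. By Theorem~\ref{thm:epis}(ii), $s \models EI(m_B)$. Now unfold $EI(m_B) = EI(m) \land \bigwedge_{j \in B} C_{S(m)\cup\{j\}}(EI(m)\land j\blacktriangleleft m) \land C_{S(m)}\,m_B$, so $s$ satisfies each of these conjuncts.

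Then I would go through the three cases. If $S(m) = \{i\}$, then $IG(m_B,i) = EI(m_B)$; I need $s \models K_i\,EI(m_B)$, which follows from the conjunct $C_{S(m)}\,m_B = C_{\{i\}}\,m_B = K_i\,m_B$ of $EI(m_B)$ together with Theorem~\ref{thm:epis}(ii), since $m_B \leftrightarrow EI(m_B)$ is valid and hence $K_i\,m_B \rightarrow K_i\,EI(m_B)$ is valid. If $i \in R(m)$, then $IG(m_B,i) = EI(m)$; here I use the validity (shown inside the proof of Theorem~\ref{thm:epis}, and already a conjunct of $EI(m_B)$ when $m$ is a send or forward to a group containing $i$) that $m \rightarrow C_{S(m)\cup R(m)}\,m$ at the outermost level, more precisely that $EI(m)$ itself is of the form $C_{\{k\}\cup G}(\dots)$ with $i \in \{k\}\cup G = S(m)\cup R(m)$, so $EI(m) \rightarrow K_i\,EI(m)$ is valid, and $s \models EI(m)$ gives $s \models K_i\,EI(m)$. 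If $i \in B$, then $IG(m_B,i) = C_{S(m)\cup\{i\}}(EI(m)\land i\blacktriangleleft m)$, which is literally one of the conjuncts of $EI(m_B)$, so $s$ satisfies it; and since $i \in S(m)\cup\{i\}$, the general fact gives $C_{S(m)\cup\{i\}}(\cdot) \rightarrow K_i\,C_{S(m)\cup\{i\}}(\cdot)$, so $s \models K_i\,IG(m_B,i)$.

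The only mildly delicate point — and the place I'd be most careful — is the positive introspection step $C_H\,\psi \rightarrow K_i\,C_H\,\psi$ for $i \in H$: this is exactly the $S4$-style validity that holds because $\sim_H$ is an equivalence relation (in particular transitive and reflexive), so $s \sim_i s'$ implies $s \sim_H s'$ implies (by transitivity of $\sim_H$) that every $s''$ with $s' \sim_H s''$ also satisfies $s \sim_H s''$, whence $s' \models C_H\,\psi$. Once this is in place, each of the three cases is a one-line consequence of Theorem~\ref{thm:epis}(ii) and the shape of $EI(m_B)$, so there is no real computational obstacle; the work is just bookkeeping of which conjunct of $EI(m_B)$ supplies the needed inner formula in each case.
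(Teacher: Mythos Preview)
Your proposal is correct and is precisely the natural unfolding of what the paper means by ``immediate consequence of Theorem~\ref{thm:epis}'': from $m_B \in E$ you get $s \models m_B$, hence $s \models EI(m_B)$ by Theorem~\ref{thm:epis}(ii), and then in each of the three cases $IG(m_B,i)$ is (equivalent to) a conjunct of $EI(m_B)$ of the form $C_H\,\psi$ with $i \in H$, so positive introspection $C_H\,\psi \rightarrow K_i\,C_H\,\psi$ finishes the job. The paper gives no explicit proof, and your careful identification of the positive-introspection step (via transitivity of $\sim_H$) is exactly the right justification.
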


\begin{example}
Using the notion of an information gain we can 
answer the first question posed in Example \ref{exa:1},
namely what Alma learned from Bob's email.
First recall the notation introduced at the end of  
Subsection \ref{subsec:emails}:

$m := s(c, l, \{a,d\})$,

$e_1 := m'_{\emptyset}$, where $m' := f(a, m, b)$,
and

$e_2 := m''_{\{a\}}$, $m'' := f(b, m', \{c, d\})$.

By definition
\begin{eqnarray*}
EI(m) &=& C_{\{a,c,d\}} m,\\
EI(m') &=& C_{\{a,b\}} (m' \land EI(m)),\\
EI(m'') &=& C_{\{b,c,d\}} (m'' \land EI(m')),\\
IG(e_2, a) &=& C_{\{a,b\}} (EI(m'') \land b \blacktriangleleft m'').
\end{eqnarray*}
This should be contrasted with the information Alma had after she sent the email $e_1$,
which was $EI(m')$.
\end{example}

\section{Common knowledge}
\label{sec:common}

Our main objective is to clarify when a group of agents acquires common
knowledge of the formula expressing that
an email was sent. This can be done within
our framework, which shows that it is appropriate for
investigating epistemic consequences of email exchanges.

First, given a set of emails $E$ and a group of agents $A$, we define
\[E_A := \{m_B \in E \mid A \subseteq S(m) \cup R(m) \ or \ \exists j \in B: (A \subseteq S(m) \cup \{j\})\}.\]

When $e \in E_A$ we shall say that the email $e$ is \bfe{shared by the group $A$}. 
Note that when $|A| \geq 3$, then $e \in E_A$ iff $A \subseteq S(m) \cup R(m)$.
When $|A| = 2$, then $e \in E_A$ also when $\exists j \in B: A = S(m) \cup \{j\}$,
and when $|A| = 1$, then $e \in E_A$ also when $A = S(m)$ or $\exists j \in B: A = \{j\}$.

The following theorem summarizes our results. It provides a simple way of testing
whether a message or an email is a common knowledge of a group of agents.

\begin{theorem}[Main Theorem]
Consider a legal state $s = (E, \L)$ and a group of agents $A$.

\begin{enumerate}[$(i)$]
\item $s \models C_A m$ iff  there is $m'_{B'} \in E_A$ such that $m' \rightarrow m$ is valid.
\item Suppose that $|A| \geq 3$. Then $s \models C_A m_B$ iff the following hold, where, recall,
$\Ag$ is the set of agents:
\begin{description}
 \item[C1] $\Ag = S(m) \cup R(m) \cup B$,

 \item[C2] for each $i \in B$ there is $m'_{B'} \in E_A$ such that $m' \rightarrow i \blacktriangleleft m$ is valid,

 \item[C3] there is $m'_{B'} \in E_A$ such that $m' \rightarrow m$ is valid.
\end{description}
\end{enumerate}
\end{theorem}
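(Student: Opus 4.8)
The plan is to prove the two directions of each biconditional by exploiting the structure of the $\sim_G$-connectivity graph on legal states, together with Lemma~\ref{lem:1} and the already-established validity of $m \to C_{S(m)\cup R(m)}m$ from the proof of Theorem~\ref{thm:epis}. For part $(i)$, the ``if'' direction is the easy one: if some $m'_{B'}\in E_A$ satisfies that $m'\to m$ is valid, then since $m'_{B'}$ is shared by $A$ we have that every agent in $A$ is involved in a full version of $m'$; I would show by the same inductive path argument as in Theorem~\ref{thm:epis} that $m'$ (hence $m$, by validity of $m'\to m$) survives every $\sim_i$-step for $i\in A$, because each such $i$ is a sender, regular recipient, or a BCC-recipient $j$ with $A\subseteq S(m')\cup\{j\}$, and in every case clauses $(i)$--$(iii)$ of the email-level relation force a full version of $m'$ to persist. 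Hence $s\models C_A m$.

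For the ``only if'' direction of $(i)$, I would argue contrapositively: suppose there is no $m'_{B'}\in E_A$ with $m'\to m$ valid. By Lemma~\ref{lem:1}$(i)$ this means no email in $E_A$ has $m$ as itself or as a part, i.e.\ informally ``$m$ does not appear inside any email shared by all of $A$.'' I then need to construct a legal state $s'$ with $s\sim_A s'$ and $s'\not\models m$. The natural construction is to delete from $E$ all emails that mention $m$ (i.e.\ all $m''_{B''}$ with $m\to m''$... no, with $m''\to m$ — all emails containing $m$ as a part or equal to $m$), possibly also repairing legality by deleting further dependent emails, and to check that this deletion is invisible to each $i\in A$: any deleted email $m''_{B''}$ with $i\in S(m'')\cup R(m'')\cup B''$ would, together with $i\in A$, have to be shared by $A$ — this is where the case analysis on $|A|$ (the three regimes for $E_A$) is delicate and is, I expect, the main obstacle. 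One must be careful that deleting emails to restore legality (L.1--L.3) does not cascade into deleting something visible to an agent of $A$; I would handle this by choosing a linear extension of a witnessing spo $\prec$ and deleting, in that order, every email whose legality-justification has been removed, arguing that each newly-deleted email also contains $m$ or depends on one that does, hence is again not shared by $A$.

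For part $(ii)$, the implication $s\models C_A m_B \Rightarrow$ C1--C3 is the cleaner direction. C3 follows from $C_A m_B \to C_A m$ and part $(i)$. C2 follows similarly: for each $i\in B$, $C_A m_B \to C_A\,(i\blacktriangleleft m)$, and an analysis parallel to part $(i)$ — using Lemma~\ref{lem:1}$(ii)$ — shows this forces an email $m'_{B'}\in E_A$ with $m'\to i\blacktriangleleft m$ valid. For C1, I would use the second clause of the truth definition and the validity, noted in the text, that $i\blacktriangleleft m \leftrightarrow j\blacktriangleleft m$ for $i,j\in S(m)\cup R(m)\cup B$: if some agent $k\notin S(m)\cup R(m)\cup B$ existed, one could toggle $k$ out of (or the relevant negative literal would already fail in) a $\sim_A$-reachable state, contradicting that $m_B$ (which asserts $\neg k\blacktriangleleft m$) is common knowledge — more precisely, since $|A|\ge 3$ the only emails shared by $A$ are those with $A\subseteq S(m)\cup R(m)$, and one shows a reachable legal state in which $m$ was sent with a strictly larger BCC set, so $m_B$ fails there unless $B$ was already maximal, i.e.\ $S(m)\cup R(m)\cup B=\Ag$.

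The converse — C1--C3 imply $s\models C_A m_B$ — is the heart of the matter and where I would spend the most effort. Given any legal $s'$ with $s\sim_A s'$, I must show $s'\models m_B$, i.e.\ $s'\models m$ and $s'$ agrees with $s$ on exactly which agents are involved in a full version of $m$. That $s'\models m$ is C3 plus the ``if'' direction of $(i)$. For the involvement: by C1 the set $S(m)\cup R(m)\cup B$ is all of $\Ag$, so $m_B$ reduces to $m\wedge\bigwedge_{i\in B} i\blacktriangleleft m$ (the negative conjuncts are vacuous); C2 together with the persistence argument gives $s'\models i\blacktriangleleft m$ for each $i\in B$; and for $i\in S(m)\cup R(m)$ we have $i\blacktriangleleft m$ following from $m$ alone. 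The subtlety I anticipate is ensuring that in $s'$ the message $m$ is not sent with a \emph{smaller} BCC set than forced — but C1 makes $B$ as large as possible, so there is nothing to lose, which is exactly why the hypothesis $|A|\ge 3$ and condition C1 make the statement clean. I expect the whole argument to reduce, after setting up the path/persistence lemma once, to the delicate legality-preserving deletion construction in the ``only if'' direction of $(i)$, so I would prove that construction as a standalone lemma first and reuse it throughout.
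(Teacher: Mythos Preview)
Your overall plan is sound for the easy directions, but the ``only if'' direction of part $(i)$ --- which you correctly flag as the crux --- has a genuine gap. You propose to delete from $E$ all emails containing $m$ and then ``check that this deletion is invisible to each $i\in A$'', asserting that any deleted email $m''_{B''}$ with $i\in S(m'')\cup R(m'')\cup B''$ and $i\in A$ ``would have to be shared by $A$''. That inference is false: one agent $i\in A$ being involved in $m''_{B''}$ does not place $m''_{B''}$ in $E_A$. Concretely, take $A=\{a,b,c\}$ and the single email $m_\emptyset$ with $S(m)=\{d\}$, $R(m)=\{a\}$: no email in $E_A$ contains $m$ (indeed $E_A=\emptyset$), yet $m_\emptyset$ is visible to $a\in A$, so your deleted state is \emph{not} $\sim_a$-related to $s$. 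The point you are missing is that $\sim_A$ is a transitive closure: you only need a path $s=s_0\sim_{i_1}s_1\sim_{i_2}\cdots\sim_{i_\ell}s'$ with each $i_k\in A$, not simultaneous indistinguishability for every $i\in A$. Your secondary idea of restoring legality by ``deleting further dependent emails'' is also problematic: an email may depend (via L.\ref{leg2} or L.\ref{leg3}) on a deleted one only through a \emph{note}, without itself containing $m$, and such an email may well lie in $E_A$.

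The paper handles this with a standalone lemma (Lemma~\ref{lem:sim}) showing $s\sim_A ((E_A)_{\leq},\L')$ for suitable $\L'$, by iteratively removing $<$-maximal emails $m_B\notin E_A$. Each single removal is realised by a short $\sim_A$-path: if some $j\in A$ is uninvolved in $m_B$, then $s\sim_j s\setminus m_B$ directly; if the only available $j\in A\setminus(S(m)\cup R(m))$ lies in $B$, one first passes through the intermediate states $s[m_{B\mapsto\{j\}}]$ and $s[m_{B\mapsto\emptyset}]$ before deleting. Legality is preserved not by cascading deletions but by augmenting the note sets --- this is built into the definitions of $s\setminus m_B$ and $s[m_{B\mapsto C}]$. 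For part $(ii)$ the paper then proves a companion lemma (Lemma~\ref{lem:im}) characterising $C_A\, i\blacktriangleleft m$ and $C_A\,\neg\, i\blacktriangleleft m$, and reduces $C_A m_B$ via distributivity to conditions \textbf{C3}--\textbf{C6}, which collapse to \textbf{C1}--\textbf{C3} when $|A|\geq 3$; your direct arguments for \textbf{C1} and \textbf{C2} are morally right but would amount to reproving that lemma inline.
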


Part $(i)$ show that when we limit our attention to messages, then things are as expected:
a group of agents acquires common knowledge of a message $m$ iff they receive an email
a part of which is $m$. If we limit our presentation to emails with the empty 
BCC sets we get as a direct corollary the counterpart of this result for a simplified
framework with messages only.

To understand part $(ii)$ note that it states that
$s \models C_A m_B$ iff
\begin{itemize}
\item the email $m_B$ involves all agents,

\item for every agent $i$ that is on the BCC list of $m_B$ there is an
email shared by the group $A$ that proves that $i$ forwarded message $m$,

\item there is an email shared by the group $A$ that proves the existence 
  of the message $m$.
\end{itemize}
The first of the above three items is striking and shows that common
knowledge of an email is rare.  \textbf{C3} is just the condition used
in part $(i)$. So an email $m_B$ such that $A \subseteq S(m) \cup R(m)$
does ensure that the group of agents $A$ acquires common knowledge of
$m$.  However, the group $A$ can never know what was the set of the BCC
recipients of $m_B$ \emph{unless} it was the set $\Ag \setminus (S(m)
\cup R(m))$ \emph{and} there is a proof for this fact in the form of
the `disclosing emails' from all members of $B$.

Having in mind that the usual purpose of the BCC is just to inform its
recipients of a certain message (that they are supposed to `keep for
themselves'), we can conclude that the presence of the BCC feature
essentially precludes the possibility that a group of agents can acquire common
knowledge of an email. Informally, the fact that the BCC feature
creates `secret information' has as a consequence that common
knowledge of an email is only possible if this secret information is
completely disclosed to the group in question.  Moreover, the message
has to be sent to all agents.

Note that using the notion of the information gain introduced in the previous section
we can determine for each agent in a group $A$ what he learned from a message $m$ or
an email $m_B$. In some circumstances, like when $m = s(i,l,G)$ and $A \subseteq G \cup \{i\}$,
this information gain can imply $C_A m$. However, the definition of $EI(m_B)$ implies
that the information gain can imply $C_A m_B$ only in the obvious case when $A = S(m)$.

Finally, the above result crucially depends on the fact that the notes
are uninterpreted.  If we allowed emails that contain propositional
formulas of the language $\Lang$ from Section \ref{sec:epistemic}
augmented by the notes, then an agent could communicate to a group $A$
the fact that he sent an email $m_B$ (with a precise set of the BCC
recipients). Then $m_B$ would become a common knowledge of the group
$A$.

As an aside let us mention that there is a corresponding result for
the case when $|A| < 3$, as well.  However, it involves a tedious case
analysis concerning the possible relations between $A, S(m), R(m)$ and
$B$, so we do not present it here.  

\begin{example}
We can use the above result to answer the second question posed in Example \ref{exa:1}.
Let $s$ be the state whose emails consist of the considered four emails, so

$e_0 := m_{\emptyset}$, where $m := s(c, l, \{a,d\})$,

$e_1 := m'_{\emptyset}$, where $m' := f(a, m, b)$,

$e_2 := m''_{\{a\}}$, where $m'' := f(b, m', \{c, d\})$, and

$e_3 := f(a, m'', \{c, d\})_{\{b\}}$.

Alma's set of notes in $s$ consists of $l$ while the sets of notes of Bob, Clare and Daniel are empty.
Note that $s$ is legal.
We have then
\[
s \not\models C_{\{a,b,c,d\}} s(c, l, \{a,d\}).
\]
The reason is that we have 
\[E_{\{a,b,c,d\}} = \emptyset.\]
Indeed, for no $m^* \in \{m,m',m'',f(a,m'',\{c,d\})\}$ we have \[\{a,b,c,d\} \subseteq S(m^*) \cup R(m^*)\]
and for no $m^*_B \in \{e_0, e_1, e_2, e_3\}$  we have some $j \in B$ such that \[\{a,b,c,d\} \subseteq S(m^*) \cup \{j\}.\] So there are 
no messages that ensure common knowledge in the group $\{a,b,c,d\}$. So even though there have been three forwards of the original message, it is not common knowledge.

Clearly, if the original message $s(c, l, \{a,d\})$ is not common knowledge then its forward $f(a,m,b)$ is not common knowledge either. Another way to derive this is directly from the Main Theorem. Namely, we have
\[
s \not\models C_{\{a,b,c,d\}} f(b, m', \{c,d\})_{\{a\}}.
\]
The reason is that condition \textbf{C2} does not hold since no email shared by 
$\{a,b,c,d\}$ exists that proves that Alma received $m''$. In contrast, 
\[
s \models C_{\{a,c,d\}} f(b, m', \{c,d\})_{\{a\}}
\]
does hold, since the email $e_3$ is shared by $\{a,c,d\}$. 
Further, if Alma had used the forward $f(a, m'', \{b,c,d\})_{\emptyset}$, then condition \textbf{C2}
would hold and we could conclude for this modified state $s'$ that
\[
s' \models C_{\{a,b,c,d\}} f(b, m', \{c,d\})_{\{a\}}.
\]
\end{example}

\section{Proof of the Main Theorem}
\label{sec:proof}
We establish first a number of auxiliary lemmas.
We shall use a new strict partial ordering on emails.
We define
\[m_B < m'_{B'}\textrm{ iff }m \neq m'\textrm{ and }m' \rightarrow m.
\]

Note that by Lemma \ref{lem:1} $m \neq m'$ and
$m' \rightarrow m$ precisely if $m'$ is a forward, or a forward of
a forward, etc, of $m$. Then for two emails $m_B$ and $m'_{B'}$ from a
legal state $s$ that satisfies conditions L.\ref{leg1}-L.\ref{leg3}
w.r.t.~an spo $\prec$, $m_B < m'_{B'}$ implies $m_B \prec m'_{B'}$ on
the account of condition L.\ref{leg1}.  However, the converse does not
need to hold since $m_B \prec m'_{B'}$ can hold on the account of
L.\ref{leg2} or L.\ref{leg3}.  Further, note that the $<$-maximal
elements of $E$ are precisely the emails in $E$ that are not
forwarded.

Given a set of emails $E$ and $E' \subseteq E$ we then define the \bfe{downward
closure} of $E'$ by 
\[
E'_{\leq} := E' \cup \{e \in E \mid \exists e' \in E': e < e'\}.
\]
The set of emails $E$ on which the downward closure of $E'$ depends will always be clear from the context.

Next, we introduce two operations on states.  Assume a state $(E,\L)$ and an email $m_B \in E$.

We define the state
\[s\setminus m_B := (E \setminus \{m_B\}, \L'),\]
with
\[L'_i := \left\{ 
\begin{array}{ll}
L_i \cup FI(m)  &\textrm{if }i \in R(m) \cup B \\
L_i &  \textrm{otherwise}
\end{array}
\right.\]

Intuitively, $s\setminus m_B$ is the result of removing the email
$m_B$ from the state $s$, followed by augmenting the sets of notes of
its recipients in such a way that they initially already had the notes
they would have acquired from $m_B$. Note that $s\setminus m_B$ is a
legal state if $m_B$ is an $<$-maximal element of $E$.

Next, given $C \subseteq B$ we define the state
\[s[m_{B \mapsto C}] := (E \setminus \{m_B\} \cup \{m_C\}, \L'),\]
with 
\[
L'_i := \left\{ 
\begin{array}{ll}
L_i \cup FI(m)  &  \textrm{if }i \in B \setminus C \\
L_i &  \textrm{otherwise}
\end{array}
\right. 
\]

Intuitively, $s[m_{B \mapsto C}]$ is the result of shrinking the
set of BCC recipients of $m$ from $B$ to $C$, followed by an
appropriate augmenting of the sets of notes of the agents that 
no longer receive $m$.

Note that $s[m_{B \mapsto C}]$ is a legal state if there is no
forward of $m$ by an agent $i \in B \setminus C$, i.e., no email of
the form $f(i,l.m,G)_D$ exists in $E$ such that $i \in B \setminus C$.

We shall need the following lemma that clarifies the importance of 
the set $E_A$ of emails.

\begin{lemma} \label{lem:sim}
 Consider a legal state $s = (E, \L)$ and a group of agents $A$.
Then for some $\L'$ the state $s' := ((E_A)_{\leq}, \L')$ is legal
and $s \sim_A s'$.
\end{lemma}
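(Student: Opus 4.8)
The plan is to construct a suitable sequence of notes $\L'$ so that $s' = ((E_A)_{\leq}, \L')$ is legal and to show $s \sim_A s'$ directly from the definition of the indistinguishability relation on states. The natural candidate for $\L'$ is obtained by repeatedly applying the two state operations introduced above: starting from $s$, delete every $<$-maximal email of $E$ that does \emph{not} belong to $(E_A)_{\leq}$, augmenting the recipients' note sets accordingly, and iterate. Since $(E_A)_{\leq}$ is downward closed under $<$, each email we delete is $<$-maximal in the current email set (any email strictly $<$-above it would also lie outside $(E_A)_{\leq}$, hence would have been removed earlier), so each deletion step produces a legal state by the remark following the definition of $s \setminus m_B$. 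After finitely many steps we reach a legal state whose email set is exactly $(E_A)_{\leq}$; call its sequence of notes $\L'$. It is worth checking that $\L'_i$ differs from $\L_i$ only by adding notes that $i$ could have obtained from emails he was involved in and that were removed — this will matter for the first clause of $\sim_i$.

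Next I would verify the three conditions in the definition of $(E,\L) \sim_A (E',\L')$, but it is cleaner to show the stronger statement that $s \sim_i s'$ for every $i \in A$; then $s \sim_A s'$ follows since $\sim_A$ is the reflexive–transitive closure of $\bigcup_{i\in A}\sim_i$. Fix $i \in A$. For the note-set clause, $L_i = L'_i$: the only notes added to agent $i$ in the construction come from emails $m_B$ with $i \in R(m) \cup B$ that were removed; but if $i$ is a regular recipient or BCC recipient of $m_B$ and $i \in A$, one checks from the definition of $E_A$ — here the hypothesis $i \in A$ is essential — that $m_B \in E_A \subseteq (E_A)_{\leq}$, so $m_B$ is never removed. (The case $i \in R(m)$ gives $A \subseteq \{i\} \cup \ldots$ only when $|A| = 1$; for larger $A$ one uses that $i \in A \subseteq S(m)\cup R(m)$ forces $m_B \in E_A$, and the small-$A$ cases are handled by the explicit description of $E_A$ given just before the Main Theorem.) For the two email-matching clauses: if $m_B \in E$ with $i \in S(m)\cup R(m)\cup B$, then as just argued $m_B \in (E_A)_{\leq}$, and $m_B \sim_i m_B$; conversely every email of $(E_A)_{\leq}$ already lies in $E$, so the same $m_B$ works.

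The main obstacle I anticipate is the bookkeeping around $E_A$ when $|A| \le 2$, where membership $m_B \in E_A$ can hold via a BCC recipient $j$ with $A \subseteq S(m)\cup\{j\}$, and one must make sure that whenever some $i \in A$ is genuinely involved in $m_B$ (as sender, recipient, or BCC recipient), the email really does land in $E_A$ — checking each of the cases $|A|=1$ and $|A|=2$ against the case split recorded after the definition of $E_A$. A secondary point requiring care is confirming that at every stage of the iterative construction the email being deleted is indeed $<$-maximal in the \emph{current} set (not merely in the original $E$), which follows from downward-closedness of $(E_A)_{\leq}$ but should be stated explicitly. Everything else is routine once these two points are pinned down.
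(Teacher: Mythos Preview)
Your proposal contains a genuine error: the ``stronger statement'' that $s \sim_i s'$ for every $i \in A$ is false in general, so the approach cannot work as written.

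Concrete counterexample: let $A = \{1,2\}$, $E = \{s(3,l,1)_\emptyset\}$, with $l \in L_3$. Then $E_A = \emptyset$ (neither $A \subseteq \{3,1\}$ nor is there any BCC recipient), hence $(E_A)_\leq = \emptyset$. The resulting $s'$ has no emails, and agent $1$ \emph{is} a regular recipient of the unique email in $E$, so the email-matching clause of $\sim_1$ fails: $s \not\sim_1 s'$. (Also $L'_1 = L_1 \cup \{l\} \neq L_1$ in general, so the note-set clause fails too.) Your key claim --- that if $i \in A$ is involved in $m_B$ then $m_B \in E_A$ --- breaks exactly here: $1 \in A$, $1 \in R(m)$, yet $m_B \notin E_A$ because the \emph{other} member of $A$ is not involved. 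The parenthetical justification ``for larger $A$ one uses that $i \in A \subseteq S(m)\cup R(m)$'' assumes what is to be shown.

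What the paper does instead: it removes one $<$-maximal email $m_B \notin E_A$ at a time and shows $s \sim_A s \setminus m_B$ via a \emph{path} in $\bigcup_{j\in A}\sim_j$, not via a single $\sim_i$. The crucial observation is that $m_B \notin E_A$ guarantees the existence of some $j \in A \setminus (S(m)\cup R(m))$. If $j \notin B$ then $s \sim_j s\setminus m_B$ directly. If $j \in B$ one passes through intermediate states $s[m_{B\mapsto\{j\}}]$ and $s[m_{B\mapsto\emptyset}]$, using a second agent $k \in A \setminus (S(m)\cup\{j\})$ (which exists because $m_B \notin E_A$) to make one of the hops. In the counterexample above this amounts to using $\sim_2$, not $\sim_1$, for the single removal step. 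Your iterative-deletion framework and the legality bookkeeping are fine; what needs to change is replacing the false per-agent claim with this path construction through $\sim_A$.
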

\begin{proof}
We prove that for all $<$-maximal emails $m_B \in E$ such that
$m_B \not\in E_A$ (so neither $A \subseteq S(m) \cup R(m)$ nor
$\exists i \in B: (A \subseteq S(m) \cup \{i\})$)
we have $s \sim_A s\setminus m_B$.
Iterating this process we get the desired conclusion.

Suppose $m_B$ is a $<$-maximal email in $E$ such that $m_B \not\in
E_A$.  Take some $j \in A \setminus (S(m) \cup R(m))$.
Suppose first $j \not\in B$. Then $s \sim_j s \setminus m_B$ so $s
\sim_A s \setminus m_B$.

Suppose now $j \in B$. 
Define 
\[
s_1 := s[m_{B \mapsto \{j\}}].
\]
Then $s_1$ is a legal state and $s \sim_j s_1$. Next, define
\[
s_2 := s[m_{B \mapsto \emptyset}].
\]
Now take some $k \in A \setminus (S(m) \cup \{j\})$. Then $s_1 \sim_k s_2 \sim_j s \setminus m_B$ so $s \sim_A s \setminus m_B$.
Note that both $s_1$ and $s_2$ are legal states since $m_B$ is $<$-maximal.
\end{proof}

Using the above lemma we now establish two auxiliary results concerning common knowledge of
the formula $i \blacktriangleleft m$ or of its negation.

\begin{lemma} \label{lem:im}
\mbox{} 
\begin{enumerate}[(i)]
\item 

\begin{tabbing}
$s \models C_A i \blacktriangleleft m$ iff \= $\exists m'_B \in E_A : (m' \rightarrow i \blacktriangleleft m)$ \\
                                           \> or ($A \subseteq S(m) \cup \{i\}$ and $\exists m_B \in E_A : (i \in B)$).
\end{tabbing}

\item 
$s \models C_A \neg i \blacktriangleleft m\textrm{ iff }s \models \neg \: i \blacktriangleleft m\textrm{ and }(A \subseteq S(m) \cup \{i\}\textrm{ or }s \models C_A \neg m)$.

\end{enumerate}
\end{lemma}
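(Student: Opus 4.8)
The plan is to use Lemma~\ref{lem:sim} as the main tool, reducing reasoning about $C_A$-formulas on a legal state $s = (E,\L)$ to reasoning about the canonical state $s' = ((E_A)_\leq, \L')$ with $s \sim_A s'$. For part $(i)$: for the right-to-left direction, suppose first there is $m'_B \in E_A$ with $m' \rightarrow i \blacktriangleleft m$ valid. Since $m'_B$ is shared by $A$, every agent in $A$ is in $S(m') \cup R(m') \cup B$, so in any legal state $s''$ with $s \sim_A s''$ there is (by tracing along an $\sim_A$-path, exactly as in the proof of Theorem~\ref{thm:epis}) an email $m'_{B''} \in E_{s''}$, whence $s'' \models m'$, hence $s'' \models i \blacktriangleleft m$ by validity of $m' \rightarrow i \blacktriangleleft m$. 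For the other disjunct, suppose $A \subseteq S(m) \cup \{i\}$ and there is $m_B \in E_A$ with $i \in B$; here the sharing condition for $|A|=2$ (with witness $j = i$) or $|A| = 1$ keeps $m_B$ in $E_A$, and along any $\sim_A$-path the clauses (i) and (iii) of the email-level $\sim$ preserve an email $m_{B''}$ with $i \in B''$, so $s'' \models i \blacktriangleleft m$. Either way $s \models C_A i \blacktriangleleft m$.

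For the left-to-right direction of $(i)$, assume $s \models C_A i \blacktriangleleft m$. By Lemma~\ref{lem:sim}, $s' = ((E_A)_\leq, \L') \models i \blacktriangleleft m$, so some email $m''_{B''}$ with $i \in S(m) \cup R(m) \cup B''$ and $m'' \rightarrow m$ lies in $(E_A)_\leq$. If $m'' \in E_A$, then since $m'' \rightarrow m$ and $i \in S(m'') \cup R(m'') \cup B''$, by Lemma~\ref{lem:1}$(ii)$ the implication $m'' \rightarrow i \blacktriangleleft m$ is valid (either $i \in S(m) \cup R(m)$, or $i$ forwarded $m$, which is recorded in $m''$), giving the first disjunct. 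If instead $m''$ is a proper predecessor, i.e.\ $m'' < m'''$ for some $m'''_{B'''} \in E_A$, one more step is needed: either $i \in S(m) \cup R(m)$ (first disjunct via any shared email with $m' \rightarrow m$, which exists because $m'''_{B'''} \in E_A$ and $m''' \rightarrow m$), or $i$ was a BCC recipient of $m$ whose forward appears in $m'''$ — but then $A \subseteq S(m''') \cup R(m''')$ or $A = S(m''') \cup \{j\}$, and by chasing whether this forwarding of $m$ is itself shared by $A$ we land in the first disjunct; the residual case where the only witness is $m$ itself with $i$ a BCC recipient and $A \subseteq S(m) \cup \{i\}$ gives the second disjunct (here one checks $m_B \in E_A$). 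This case bookkeeping — distinguishing when the ``reason'' $i \blacktriangleleft m$ holds in $s'$ comes from a shared forwarding email versus from $m$ alone — is the main obstacle, and I would organize it around Lemma~\ref{lem:1}$(ii)$ applied to the $<$-minimal shared email witnessing $i \blacktriangleleft m$.

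For part $(ii)$: the right-to-left direction splits on the two disjuncts. If $s \models \neg\, i \blacktriangleleft m$ and $A \subseteq S(m) \cup \{i\}$, take any legal $s''$ with $s \sim_A s''$; if some $m_{B''} \in E_{s''}$, then along the $\sim_A$-path — whose agents all lie in $S(m) \cup \{i\}$ — clauses (i)--(iii) of the email-level $\sim$ propagate the email (with $i$ possibly entering/leaving the BCC set) back to $s$, forcing $s \models m$, so $s \models i \blacktriangleleft m$, contradiction; hence $s'' \models \neg\, i \blacktriangleleft m$. If instead $s \models \neg\, i \blacktriangleleft m$ and $s \models C_A \neg m$, then for any $s''$ with $s \sim_A s''$ we get $s'' \models \neg m$, hence $s'' \models \neg\, i \blacktriangleleft m$ trivially. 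For the left-to-right direction, $C_A \neg i \blacktriangleleft m \rightarrow \neg\, i\blacktriangleleft m$ is immediate from reflexivity of $\sim_A$; and if $A \not\subseteq S(m) \cup \{i\}$ we must derive $s \models C_A \neg m$. Arguing contrapositively, if $s \not\models C_A \neg m$ there is a legal $s''$ with $s \sim_A s''$ and $s'' \models m$; pick such an $s''$ and modify it to $s''[m_{B'' \mapsto \ldots}]$ or add a full version of $m$ with some agent of $A \setminus (S(m) \cup \{i\})$ placed as a regular recipient (using that $A \not\subseteq S(m) \cup \{i\}$) so that the new state is still $\sim_A$-reachable from $s$ and satisfies $i \blacktriangleleft m$ — contradicting $s \models C_A \neg i \blacktriangleleft m$. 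Making this construction respect legality (it does, by the remarks preceding Lemma~\ref{lem:sim} on when $s[m_{B\mapsto C}]$ is legal, applied to a $<$-maximal occurrence) is the delicate point of part $(ii)$.
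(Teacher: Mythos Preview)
Your overall strategy --- reduce to the canonical state $s' = ((E_A)_\leq,\L')$ via Lemma~\ref{lem:sim} and then do a case analysis --- matches the paper's. But the $(\Rightarrow)$ direction of part~$(i)$ has a real gap. From $s' \models i \blacktriangleleft m$ you get a full version $m_B \in (E_A)_\leq$ with $i \in S(m)\cup R(m)\cup B$ (note: the witness is exactly $m$, not some $m''$ with $m'' \to m$; your notation here is off). The case you do not cover is: $i \in B \setminus (S(m)\cup R(m))$, $A \not\subseteq S(m)\cup\{i\}$, and \emph{no} forward $f(i,l.m,G)_C$ of $m$ by $i$ lies in $(E_A)_\leq$. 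Your write-up treats this as impossible (``the residual case where \ldots\ $A \subseteq S(m)\cup\{i\}$''), but nothing you have said rules it out. The paper disposes of it by an explicit construction: set $s'' := s'[m_{B\mapsto B\setminus\{i\}}]$, which is legal precisely because $i$ does not forward $m$ in $s'$, pick $j \in A\setminus(S(m)\cup\{i\})$, and observe $s' \sim_j s''$ while $s'' \models \neg\, i\blacktriangleleft m$ --- contradicting $s \models C_A\, i\blacktriangleleft m$. This shrink-the-BCC-set move is the missing idea; without it your case analysis does not close.

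For part~$(ii)$, two smaller corrections. In the $(\Leftarrow)$ branch with $A \subseteq S(m)\cup\{i\}$, you should assume $s'' \models i\blacktriangleleft m$ (not merely $s''\models m$) for the contradiction: then the witnessing email $m_{B''}$ has $i\in B''$ (when $i\notin S(m)\cup R(m)$), and along a $\sim_A$-path through agents in $S(m)\cup\{i\}$ clauses~(i) and~(iii) preserve ``$i$ is in the BCC set'', giving $s\models i\blacktriangleleft m$. Your step ``$s\models m$, so $s\models i\blacktriangleleft m$'' is not valid in general. In the $(\Rightarrow)$ branch, the construction is to \emph{enlarge} the BCC set --- replace $m_B$ by $m_{B\cup\{i\}}$ --- and then use $j\in A\setminus(S(m)\cup\{i\})$ to get $s'\sim_j s''$ with $s''\models i\blacktriangleleft m$. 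The operation $s[m_{B\mapsto C}]$ from the paper only shrinks ($C\subseteq B$), and ``placing an agent of $A$ as a regular recipient'' is not possible since the recipient set is part of the syntax of $m$.
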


To illustrate various alternatives listed in $(i)$ note that
each of the following emails in $E$ ensures that $s \models C_{\{j\}}
i \blacktriangleleft m$, where in each case $m$ is the corresponding
send message:
\[
\begin{array}{c}
s(i,l,G)_{\{j\}}, f(k, q.s(i,l,G), H)_{\{j\}},\ \\
s(k,l,i)_{\{j\}}, \ f(i, q.s(k,l,G), H)_{\{j\}}, \ s(j,l,G)_{\{i\}}.
\end{array}
\]
The first four of these emails imply $s \models C_{\{j\}} i \blacktriangleleft m$ by the first clause of $(i)$,
the last one by the second clause.

\begin{proof}
$(i)$
$(\Rightarrow)$
Suppose $s \models C_A i\blacktriangleleft m$. 
Take the legal state $s'$ constructed in Lemma
\ref{lem:sim}. Then $s \sim_A s'$, so 
$s' \models i\blacktriangleleft m$.  

Hence for some group $B$ we have $m_B \in (E_A)_{\leq}$ and $i \in S(m) \cup R(m)
\cup B$. Three cases arise.

\emph{Case 1}. $i \in S(m) \cup R(m)$.

Then $m \rightarrow i\blacktriangleleft m$.  So if $m_B \in E_A$, then the claim
holds.  Otherwise some email $m'_{B'} \in E_A$ exists such that $m_B <
m'_{B'}$.  Consequently $m' \rightarrow m$ and hence $m' \rightarrow
i\blacktriangleleft m$. So the claim holds as well.

\emph{Case 2}. $i \not\in S(m) \cup R(m)$ and $A \subseteq S(m) \cup \{i\}$.

Then $i \in B$ since $i \in S(m) \cup R(m) \cup B$.
Then by the definition of $E_A$, $m_B \in E_A$ so the claim holds.

\emph{Case 3}. $i \not\in S(m) \cup R(m)$ and $\neg (A \subseteq S(m) \cup \{i\})$.

If for some note $l$ and groups $G$ and $C$ we have $f(i,l.m,G)_C \in
(E_A)_{\leq}$, then either $f(i,l.m,G)_C \in E_A$ or for some $m'_{B'}
\in E_A$ we have $f(i,l.m,G)_C < m'_{B'}$. In the former case we use
the fact that the implication $f(i,l.m,G) \rightarrow i\blacktriangleleft m$
is valid. In the latter case $m' \rightarrow f(i,l.m,G)$ and hence $m'\rightarrow
i\blacktriangleleft m$.  So in both cases the claim holds.

Otherwise let $s'' = s'[m_{B \mapsto B \setminus \{i\}}]$.
Note that $s''$ is a legal state because $i$ does not forward $m$ in $s'$. 
Take some $j \in A \setminus (S(m)  \cup \{i\})$. Then 
$s' \sim_j s''$, so $s \sim_A s''$. Moreover, $s'' \models \neg i\blacktriangleleft m$, 
which yields a contradiction. So this case cannot arise.

$(\Leftarrow)$ The claim follows directly by the definition of semantics.  We
provide a proof for one representative case.  Suppose that for some
email $m'_B \in E_A$ both $A \subseteq S(m') \cup R(m')$ and $m' \rightarrow i
\blacktriangleleft m$.  Take some legal state $s'$ such that $s \sim_A
s'$.  Then for some group $B'$ we have $m'_{B'} \in E_{s'}$. So $s'
\models m'$ and hence $s' \models i \blacktriangleleft m$.
Consequently $s \models C_A i\blacktriangleleft m$. 

$(ii)$ Let $s = (E,\L)$. 

$(\Rightarrow)$ Suppose $s \models C_A \neg i \blacktriangleleft m$. Then $s \models \neg i \blacktriangleleft m$. 
Assume $A \not\subseteq S(m) \cup \{i\}$ and $s \not\models C_A \neg m$. Then there is some legal state $s' = (E',\L')$
such that $s \sim_A s'$ and $s' \models m$. Then there is some group $B$ such that $m_B \in E'$.
Let $j \in A \setminus (S(m) \cup \{i\})$ and let $s'' = (E' \setminus \{m_B \} \cup \{m_{B \cup \{i\}}\}, \L')$.
Then $s' \sim_j s''$ so $s \sim_A s''$. But $s'' \models i \blacktriangleleft m$ which contradicts our assumption.

$(\Leftarrow)$ 
Suppose that $s \models \neg i \blacktriangleleft m$ and either $A \subseteq S(m) \cup \{i\}$ or $s \models C_A \neg m$.
We first consider the case that $A \subseteq S(m) \cup \{i\}$. Let $s'$ be any legal state such that $s \sim_A s'$.
Assume $s' \models i \blacktriangleleft m$. Then $m_B \in E_{s'}$ for some group $B$ such that $i \in B$.
Since $A \subseteq S(m) \cup \{i\}$, any legal state $s''$ such that $s' \sim_A s''$ contains an email $m_C \in E_{s''}$
for some group $C$ such that $i \in C$. So $s'' \models i \blacktriangleleft m$.
In particular, this holds for the state $s$, which contradicts our assumption.
So $s' \models \neg s(i,n,G)$ and hence $s \models C_A \neg s(i,n,G)$.

Now we consider the case that $s \models C_A \neg m$. Let $s'$ be such that $s \sim_A s'$. Then $s' \models \neg m$.
Since $i \blacktriangleleft m \rightarrow m$ is valid, we get
$s' \models \neg i \blacktriangleleft m$. So $s \models C_A \neg i \blacktriangleleft m$.
\end{proof}

We are now ready to prove the Main Theorem.

\begin{proof}[of the Main Theorem]

$(i)$
$(\Rightarrow)$
Suppose $s \models C_A m$.  Take the legal state $s'$ constructed in
Lemma \ref{lem:sim}. Then $s \sim_A s'$, so $s' \models m$.  So for
some group $B$ we have $m_B \in (E_A)_{\leq}$.  
Hence either $m_B \in E_A$ or some email $m'_{B'} \in E_A$
exists such that $m_B < m'_{B'}$.
In both cases the claim holds.

$(\Leftarrow)$
Suppose that for some email $m'_B \in E_A$ we have $m' \rightarrow m$.  Take
some legal state $s'$ such that $s \sim_A s'$. Then by the form of $E_A$
and the definition of semantics for some group $B'$ we have $m'_{B'} \in
E_{s'}$. So $s' \models m'$ and hence $s' \models m$. Consequently $s 
\models C_A  m$.

$(ii)$
By the definition of $m_B$, the
fact that the $C_A$ operator distributes over the conjunction, part $(i)$  of the Main Theorem and
Lemma \ref{lem:im} we have
\[s \models C_A m_B\textrm{ iff  \textbf{C3}-\textbf{C6},}\]
where

\begin{description}
\item[C4] $\bigwedge_{i \in S(m) \cup R(m) \cup B}$ $((A \subseteq S(m) \cup \{i\}$ and $\exists B' : (m_{B'} \in E_A \textrm{ and } i \in B'))$ 
                                   or $\exists m'_{B'} \in E_A : (m' \rightarrow i \blacktriangleleft m))$,

\item[C5] $\bigwedge_{i \not\in S(m) \cup R(m) \cup B}$ $(A \subseteq S(m) \cup \{i\}$ or $s \models C_A \neg m)$,

\item[C6] $s\models \bigwedge_{i \not\in S(m) \cup R(m) \cup B} \neg i \blacktriangleleft m$.
\end{description}

$(\Rightarrow)$
Suppose $s \models C_A m_B$. Then properties \textbf{C3}-\textbf{C6} hold.
But $|A| \geq 3$ and $s \models C_A m$ imply that no conjunct of \textbf{C5} holds. 
Hence property \textbf{C1} holds.

Further, since $|A| \geq 3$ the first
disjunct of each conjunct in \textbf{C4} does not hold.  So the second
disjunct of each conjunct in \textbf{C4} holds, which implies property \textbf{C2}.

$(\Leftarrow)$
Suppose properties \textbf{C1}-\textbf{C3}
hold. It suffices to establish properties \textbf{C4}-\textbf{C6}.

For $i\in S(m) \cup R(m)$ we have $m \rightarrow i \blacktriangleleft
m$. So \textbf{C2} implies property \textbf{C4}.
Further, since \textbf{C1} holds,
properties \textbf{C5} and \textbf{C6} hold vacuously. 
\end{proof}

\section{Analysis of BCC}
\label{sec:bcc}

In our framework we built emails out of messages using the BCC feature.
So it is natural to analyze whether and in what sense the emails can be reduced
to messages without BCC recipients.

Given a send email $s(i,l,G)_B$, where $B = \{j_1, ..., j_k\}$,
we can simulate it by the following sequence of
messages:
\[
s(i,l,G), f(i,s(i,l,G),j_1), ..., f(i,s(i,l,G),j_k).
\]
Analogous simulation can be formed for the forward email $f(i,l.m,G)_B$.
At first sight, it seems that this simulation has exactly the same epistemic 
effect as the original email with the BCC recipients.
In both states, each agent $j_1, ..., j_k$ receives separately
a copy of the message and only the sender of this message is aware of this.
However, there are two subtle differences.

First of all, there is a syntactic difference between message that
agents $j_1, ..., j_k$ receive in the original case and in the
simulation. In the original case they receive exactly the message $m$,
and in the simulation they receive a forward of it.  This also means
that if they reply to or forward the message, there is a syntactic
difference in this reply or forward. This difference is purely
syntactic and does not essentially influence the knowledge of the
agents, even though it clearly influences the truth value of the
formula $j \blacktriangleleft m$ which is true for $j \in \{j_1, ...,
j_k\}$ in the original case but not in the simulation.

The second difference is more fundamental.  If agents $j_1, ..., j_k$
are BCC recipients of $m$ and they do not send a reply to or a forward
of $m$, then each of them can be sure that no other agent but the
sender of $m$ knows he was a BCC recipient.  Indeed, in our framework
there is no message the sender of $m$ could send to another agent,
that expresses that agents $j_1, ..., j_k$ were the BCC recipients of
$m$.

In the case of the simulation however, these recipients do not receive a BCC
but a forward. Since these forwards may have additional BCC recipients of which
agents $j_1, ..., j_k$ are unaware, they cannot be sure that the other agents
do not know that they received a forward of the message.
Furthermore, the sender of $m$ could also forward the forward he sent to 
$j_1, ..., j_k$ without informing them about it, thus also revealing their
knowledge of $m$.

A concrete example that shows this difference is the following. 
\begin{example}
\label{exa:5}
Let
\[
E_s = \{s(1, l, 2)_{\{3\}}\}.
\]
Then $s \models K_3 \neg K_2 K_3 s(1,l,2)$, that is, agent 3 is sure
that agent 2 does not know about his knowledge of the message $s(1,l,2)$.
A simulation of this email without a BCC recipient would result in the state $t$ with
(we abbreviate here each email $m_{\emptyset}$ to $m$)
\[
E_{t} = \{s(1, l, 2), f(1, s(1,l,2), 3)\}.
\]
Now consider a state $t'$ with:
\[
E_{t'} = \{s(1,l,2), f(1,s(1,l,2),3),f(1,f(1,s(1,l,2),3),2)\}.
\]
Clearly $t \sim_3 t'$ and $t' \models K_2K_3s(1,l,2)$. 
This shows that $t \not\models K_3 \neg K_2K_3s(1,l,2)$.
\end{example}

This argument can be made more general as follows.
Below, in the context of a state we identify each message $m$ with the email $m_{\emptyset}$.
Then we have the following result.

\begin{theorem}
Take a legal state $s = (E,\L)$, an email $m_B \in E$ and an agent $j \in B$ 
such that $E$ does not contain a forward of $m$ by $j$ or to $j$. 
Then for every set of messages $M$ such that $(M,\L)$ is a legal state we 
have for every agent $k \not\in S(m) \cup \{j\}$
\[s \models K_j m \land K_j \neg K_k K_j m,\]
while
\[(M,\L) \not\models K_j m \land K_j \neg K_k K_j m.\]
\end{theorem}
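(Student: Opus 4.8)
The plan is to prove the two displayed assertions separately: the conjunction holds at the BCC-state $s$, and it fails at every legal state $(M,\L)$ built from messages alone. For the first assertion, $s\models K_j m$ is immediate, since $m_B\in E$ with $j\in B$, so every legal state $j$-indistinguishable from $s$ contains a full version $m_{B'}$ of $m$ with $j\in B'$ (clause (iii) of $\sim_j$ on emails) and hence satisfies $m$. For $s\models K_j\neg K_k K_j m$ I fix an arbitrary legal $s'=(E',\L')$ with $s\sim_j s'$ and build a legal $s''$ with $s'\sim_k s''$ and $s''\not\models K_j m$. First, $s'$ inherits the relevant structure of $s$: there is $m_{B'}\in E'$ with $j\in B'$, and since $j$ would see, and $\sim_j$ would force $E$ to match, any forward of $m$ in which $j$ participates, the hypothesis transfers, so $j$ is neither sender, recipient nor BCC recipient of any proper forward of $m$ in $E'$; in particular $j$ does not forward $m$ in $s'$. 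Let $s'':=s'[m_{B'\mapsto B'\setminus\{j\}}]$. By the remark following the definition of that operation this state is legal (there is no forward of $m$ by a member of $B'\setminus(B'\setminus\{j\})=\{j\}$), and $s'\sim_k s''$: the only altered email is $m_{B'}$, only $L_j$ changes, and since $k\notin S(m)$ and $k\neq j$, agent $k$ either does not see $m_{B'}$ at all or is a regular or fellow-BCC recipient of it, in which case $m_{B'}\sim_k m_{B'\setminus\{j\}}$ by clause (ii) or (iii). Finally, in $s''$ agent $j$ is involved in no occurrence of $m$ (an $m$-relevant email being one of the form $m^*_C$ with $m^*\rightarrow m$ valid); repeatedly applying $t\mapsto t\setminus(\text{a }<\text{-maximal }m\text{-relevant email})$ — each step legal and invisible to $j$ — yields a $j$-indistinguishable legal state with no $m$-relevant email, so $s''\not\models K_j m$. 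Hence $s'\not\models K_k K_j m$, and $s$ satisfies the conjunction.

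For the second assertion, suppose for contradiction that $(M,\L)\models K_j m\wedge K_j\neg K_k K_j m$ for some legal message-only state (identifying each message $m^*$ with $m^*_\emptyset$) and some $k\notin S(m)\cup\{j\}$. Since $j\in B$ forces $j\notin S(m)\cup R(m)$, agent $j$ can only know $m$ through a forward he received or sent, so the set $Q:=\{m^*\in M : m^*\rightarrow m\text{ is valid and } j\in S(m^*)\cup R(m^*)\}$ is non-empty: otherwise, successively deleting from $M$ the $<$-maximal messages having $m$ as a part — none of which involves $j$ — gives, along a chain of $\sim_j$-steps, a legal state not satisfying $m$, contradicting $(M,\L)\models K_j m$. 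Choose $m^+$ that is $<$-minimal in $Q$. Then $j\notin S(m^+)$: otherwise $m^+=f(j,l^+.m',G^+)$ with $m'\rightarrow m$ valid, and condition L.\ref{leg1} supplies $m'\in M$ with $j\in S(m')\cup R(m')$; as $j\notin S(m)\cup R(m)$ this forces $m'\neq m$, so $m'\in Q$ and $m'<m^+$, contradicting minimality. Hence the sender of $m^+$ is some $i^+\neq j$ and $j\in R(m^+)\setminus S(m^+)$, so $m^+\rightarrow K_j m^+$ is valid (a regular recipient of a message knows it), and together with the valid implication $m^+\rightarrow m$ (Lemma~\ref{lem:1}) also $m^+\rightarrow K_j m$ is valid.

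Now I exhibit a legal $(M',\L)$ with $(M,\L)\sim_j(M',\L)$ and $(M',\L)\models K_k K_j m$, contradicting $(M,\L)\models K_j\neg K_k K_j m$. If $k\in S(m^+)\cup R(m^+)$, take $(M',\L):=(M,\L)$: then $(M,\L)\models K_k m^+$, so every $k$-indistinguishable state satisfies $m^+$, hence $K_j m$, giving $(M,\L)\models K_k K_j m$. Otherwise put $M':=M\cup\{f(i^+,\textbf{true}.m^+,\{k\})\}$, with $\textbf{true}$ the note held by everyone. Adding this email as a $<$-maximal element keeps the state legal (condition L.\ref{leg1} for it is met by $m^+\in M$ since $i^+\in S(m^+)$, and condition L.\ref{leg3} is vacuous since $\textbf{true}\in L_{i^+}$), and it does not involve $j$ (because $i^+\neq j$ and $k\neq j$), so $(M,\L)\sim_j(M',\L)$; since $k$ is a regular recipient of the new forward and that forward implies $m^+$ (Lemma~\ref{lem:1}), $(M',\L)\models K_k m^+$, whence, as above, $(M',\L)\models K_k K_j m$. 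Either way we reach the desired contradiction.

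The delicate part is the message-only direction, and within it the two structural facts it rests on: that $j$'s knowledge of $m$ is necessarily witnessed by a forward of $m$ that $j$ actually received from some $i^+\neq j$ (so that an external agent already knows that $j$ knows $m$), and that one can then route a fresh forward from $i^+$ to $k$ undetected by $j$. The $<$-minimal choice of $m^+$ is exactly what excludes the awkward possibility that $j$'s only link to $m$ is a self-forward; checking, via condition L.\ref{leg1} and $j\notin S(m)\cup R(m)$, that this case genuinely cannot arise is the crux. In the first direction the corresponding subtlety is that stripping $j$ from the BCC list of $m$ must be simultaneously legal and invisible to $k$, which again hinges on $j$ never forwarding $m$ and on $k\neq S(m)$.
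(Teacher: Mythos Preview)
Your proof is correct and follows the same two-part strategy as the paper's: strip $j$ from the BCC list of $m$ (invisibly to $k$) to defeat $K_k K_j m$ in the BCC state, and in the message-only state locate a public forward $m^+$ of $m$ that $j$ received from some $i^+\neq j$, then route a fresh forward from $i^+$ to $k$ unseen by $j$. The only differences are cosmetic: where the paper compensates for removing $j$ from $B'$ by inserting an extra message $s(i,l,j)$, you use the operation $s'[m_{B'\mapsto B'\setminus\{j\}}]$ (which adjusts $L_j$ instead) followed by a chain of $\sim_j$-deletions; and your $<$-minimal choice of $m^+\in Q$ makes rigorous what the paper merely sketches by invoking the Main Theorem and the legality conditions.
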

\begin{proof}
Agent $j$ is a BCC recipient of $m$ in $s$, so by the definition of the 
semantics $s \models K_j m$. We will first show that 
$s \models K_j \neg K_k K_j m$.
Take some state $t$ such that $s \sim_j t$. Then by the definition of the 
semantics there is some group $C$ such that $m_C \in E_t$ and $j \in C$.
Suppose that $m$ is a send email, say $m = s(i, l, G)$.
For the case that $m$ is a forward email the reasoning is analogous.
Let $u$ be the state like $t$, but with 
\[E_u = E_t \backslash \{s(i,l,G)_C\} \cup \{s(i,l,G)_{C \backslash \{j\}}, s(i, l, j)\}.\]
Note that we implicitly assume that no full version of $s(i, l, j)$
is already present in $E_t$. If there were such a full version, we could
do the same construction without adding $s(i,l,j)$ to $E_t$.

Since there are no forwards of $m$ by $j$ or to $j$ in $E$, and $s \sim_j t$,
there are no forwards of $m$ by $j$ or to $j$ in $E_t$. This shows that
$u$ is a legal state and that there are no forwards of $m$ to $j$ in
$E_u$ so $u \not\models K_j m$.
Clearly, for every $k \not\in S(m) \cup  \{j\}$ we have $t \sim_k u$. 
So $t \not\models K_k K_j m$, which shows that
 $s \models K_j \neg K_k K_j m$.

Take now any set of messages $M$ such that $(M,\L)$ is legal and suppose 
$(M,\L) \models K_j m$. Then by the Main Theorem there is some
message $m'$ in which agent $j$ was involved that implies that message $m$
was sent. By the requirements on the legal states we know that there is
such a message $m'$ of which agent $j$ was a recipient, and not
the sender, since agents can only send information they initially knew
or received through some earlier message. Since there are no BCC recipients
in $M$, we conclude that agent $j$ is a regular recipient of $m'$
that he received from some other agent and that $m' \rightarrow m$ is valid.

Define the set of messages $M'$ by
\[M' := M \cup \{f(S(m'), m', k)\}.\]
Note that $(M',\L)$ is a legal state, and $(M',\L) \models K_k m'$.
Since $j$ is a regular recipient of $m'$, $m' \rightarrow K_j m'$ is valid and 
since $m' \rightarrow m$ is also valid this implies that $(M',\L) \models K_k K_j m$.
Also, since $j$ is not involved in $f(S(m'),m',k)$, $(M,\L) \sim_j (M',\L)$.
This shows that $(M,\L) \not\models K_j \neg K_k K_j m$.
In view of our assumption that $(M,\L) \models K_j m$ we conclude
that $(M,\L) \not\models K_j m \land K_j \neg K_k K_j m$.
\end{proof}

In this theorem we assume that for the BCC recipient $j$ of the message $m$ 
there are no forwards of $m$ to $j$ or by $j$. The theorem shows that under
these assumptions, $s$ and $(M,L)$ can be distinguished by an epistemic formula
concerning the message $m$. We will now show that these assumptions are necessary.

\begin{example}
Take a legal state
$s = (E,\L)$ with \[E = \{s(1,l,2)_{\{3\}}, f(2,s(1,l,2),3)\}\]
and \[M = \{s(1,l,2), f(1,s(1,l,2),3), f(2,s(1,l,2),3)\}.\]
We can see that $(M,\L)$ is a perfect BCC-free simulation of $s$: 
for every formula $\varphi$ that holds in $s$, if we replace the occurrences
of $3 \blacktriangleleft s(1,l,2)$ in $\varphi$ by $f(1, s(1,l,2), 3)$ then the 
result holds in $(M,\L)$.
The reason that we can find such a set $M$ is that in $E$ there is a forward 
of $s(1,l,2)$ to agent $3$. This reveals the `secret' that 
agent 3 knows about $s(1,l,2)$ and then the fact that agent 3 was a BCC recipient
of $s(1,l,2)$ is no longer relevant.
\end{example}

\begin{example}
A similar example shows the importance of the assumption
that there are no forwards by a BCC recipient. 
Take a legal state $s = (E,\L)$ with 
\[E = \{s(1,l,2)_{\{3\}}, f(3,s(1,l,2),2)\}\]
and \[M = \{s(1,l,2), f(1,s(1,l,2),3), f(3,f(1,s(1,l,2),3),2)\}.\]
Again, for every formula $\varphi$ that holds in $s$, if we replace the occurrences
of $3 \blacktriangleleft s(1,l,2)$ in $\varphi$ by $f(1, s(1,l,2), 3)$ then the 
result holds in $(M,\L)$. Now the reason is that agent 3 informed agent 2
that he was a BCC recipient of $s(1,l,2)$ in $s$ by sending a forward of this message,
so again the fact that agent 3 knows $s(1,l,2)$ is not a secret anymore.
\end{example}

It is interesting to note that the impossibility of simulating BCC by means of messages
is in fact caused by our choice of uninterpreted notes as the basic content of the messages.
If our framework allowed one to send messages containing more complex information,
for example a formula of the form $j \blacktriangleleft m$, the sender of $m$ could have 
informed other agents who were the BCC recipients.
Then in Example \ref{exa:5} we could consider a state $s'$ with
\[
E_{s'} = \{s(1, n, 2)_{\{3\}}, s(1, 3 \blacktriangleleft s(1,n,2),2)\}.
\]
By appropriately extending our semantics we would have then $s \sim_3
s'$ and $s' \models K_2 K_3 s(1,n,2)$, and hence $s \not\models K_3 \neg
K_2 K_3 s(1,n,2)$, so the difference between the above two states $s$ and $t$ 
would then disappear.

Similarly, if we allowed epistemic formulas as contents of the
messages, then in the above example agent 1 could use the message
$s(1, K_3 s(1,n,2),2)$ to inform agent 2 that he BCC'ed agent 3 when
sending the message $s(1,n,2)$.  We leave an analysis of such
extensions of our framework and the role of BCC in these extended
settings as future work.

Finally, let us mention another feature of our syntax that cannot be
faithfully simulated by simpler means ---that of appending a note to a
forwarded message.  Suppose that we allow instead only a `simple'
forward $f(i,m,G)$ and simulate the current forward $f(i,l.m,G)$ by a
send and a simple forward, i.e., by the sequence $s(i,l,G), f(i,m,G)$.
Then the fact that the note $l$ was `coupled' with $m$ can in some
circumstances provide a piece of additional information that becomes
lost during the simulation.  Here is a concrete example. We do not use
BCC here, so each email $m_{\{\emptyset\}}$ is written as $m$.
\begin{example}
\label{exa:6}
Suppose that $l_1, l_2 \in L_1$ and $l_1, l_2 \not\in L_i$ for $i \neq 1$. Let $m := s(1, l, 1)$ and
\[
E_s := \{m, f(1, l_{2}.m, 2)\}.
\]
Then for all $i$ we have
$s \models K_1 (K_i m \rightarrow K_i l_2)$, that is, agent 1 knows that
every agent who knows the message $m$ also knows the note $l_2$.

A simulation of these two messages with a simple forward 
would yield the state $t$ with
\[
E_t := \{m, s(1, l_{2}, 2), f(1, m, 2)\}.
\]
Now consider a state $t'$ with:
\[
E_{t'} := \{m, s(1, l_{2}, 2), f(1, m, 2), f(2, m, 3)\}.
\]
Clearly $t \sim_1 t'$ and $t' \models K_3 m \wedge \neg K_3 l_2$.
This shows that $t \not\models K_1 (K_3 m \rightarrow K_3 l_2)$.
\end{example}

Note that this example exploits the fact that in our framework
the agents can forward the notes that are `buried'
within the received emails (thanks to the references to $l \in FI(m)$
in conditions L.\ref{leg2} or L.\ref{leg3} in Subsection \ref{subsec:legalstates}),
whereas they can only forward the messages they received. That is, they cannot
forward messages that are `buried' within the emais they received. This
natural restriction is satisfied by the email systems.

\section{Email exchanges}
\label{sec:dis}

Finally, we return to the issue of the synchronicity of the email
communication mentioned in Section~\ref{sec:epistemic}. To this end we
introduce an operational semantics that also allows us to provide a
characterization of the notion of a legal state in terms of email
exchanges. In this setting emails are sent in a nondeterministic
order, each time respecting the restrictions imposed by the legality
conditions L.\ref{leg1} -- L.\ref{leg3} of Subsection
\ref{subsec:legalstates}.

This operational semantics is defined in the style of \cite{Plo82}, 
though with some important differences concerning the notions of a
program state and the atomic transitions.
Let $M$ be the set of all messages (so \emph{not} emails).  By
a \bfe{mailbox} we mean a function $\sigma: \Ag \rightarrow {\cal P}(M)$;
$\sigma(i)$ is then the mailbox of agent $i$.  
If for all $i$ we have $\sigma_0(i) = \emptyset$, then we call $\sigma_0$ the
\bfe{empty mailbox}.  A \bfe{configuration} is a construct of the form
$<s,\sigma>$, where $s$ is a legal state and $\sigma$ is a mailbox.

Atomic transitions between configurations are of the form
\[
<s, \sigma> \rightarrow <s',\sigma'>,
\]
where $\dot{\cup}$ denotes disjoint union and

\begin{itemize}
\item $s := (E ~\dot{\cup}~ \{m_B\}, \L)$,

\item $s' := (E, \L)$,

\item for $j \in \Ag$
 \[
  \sigma'(j) := 
  \left\{
  \begin{array}{ll}
    \sigma(j) \cup \{m\} & \textrm{if }j\in R(m) \cup S(m) \cup B \\
    \sigma(j)            & \textrm{otherwise}
  \end{array} \right.
 \]
\end{itemize}

We say that the above transition \bfe{processes} the email $m_B$.
This takes place subject to the following conditions
depending on the form of $m$, where $\L = (L_1, \ldots, L_n)$:

\begin{itemize}
\item \textbf{send} $m = s(i,l,G)$. 

We stipulate then that $l \in L_i$ or for some $m' \in \sigma(i)$ we have $l \in FI(m')$. 
In the second case of the second alternative we say below that $m$ \bfe{depends on} $m'$.

\item \textbf{forward}
$m = f(i,l.m',G)$.

We stipulate then that $m' \in \sigma(i)$, and
$l \in L_i$ or for some $m'' \in \sigma(i)$ we have $l \in FI(m'')$.

In the case of the first alternative we say below that $m$ \bfe{depends on} $m'$
and in the case of the second alternative that $m$ \bfe{depends on} $m'$ and $m''$.
\end{itemize}

Given a legal state $s$ an \bfe{email exchange starting in $s$}
is a maximal sequence of transitions starting in the
configuration $<s,\sigma_0>$, where $\sigma_0$ is the empty mailbox.
An email exchange \bfe{properly terminates} if its last configuration is
of the form $<s', \tau>$, where $s' = (\emptyset,\L)$.
The way the atomic transitions are defined clarifies that the communication 
is synchronous.

Note that messages are never deleted from the mailboxes.  Further,
observe that in the above atomic transitions we augment the mailboxes
of the recipients of $m_B$ (including the BCC recipients) by $m$ and
\emph{not} by $m_B$.  So the recipients of $m_B$ only `see' the
message $m$ in their mailboxes. Likewise, we augment the mailbox of the
sender by the message $m$ and \emph{not} by $m_B$.  As a result when
in an email exchange a sender forwards his own email, 
the BCC recipients of the original
email are not shown in the forwarded email. 
This is consistent with the discussion of the emails given
in Subsection \ref{subsec:emails}.

Observe that from the form of a message $m$ in the mailbox $\sigma(i)$
we can infer whether agent $i$ received it by means of a BCC. Namely,
this is the case iff $i \not\in R(m) \cup S(m)$.  (Recall that by
assumption the sets of regular recipients and BCC recipients of an
email are disjoint.)

The following result then clarifies the concept of a legal state.
\begin{theorem}
  The following statements are equivalent:

  \begin{enumerate}[(i)]
  \item $s$ is a legal state,

  \item an email exchange starting in $s$ properly terminates,

  \item all email exchanges starting in $s$ properly terminate.
  \end{enumerate}
\end{theorem}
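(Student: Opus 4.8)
The plan is to prove the cycle $(iii) \Rightarrow (ii) \Rightarrow (i) \Rightarrow (iii)$. The implication $(iii) \Rightarrow (ii)$ is immediate once we know that at least one email exchange starting in $s$ exists, which follows because from any configuration $\langle s, \sigma\rangle$ with $s$ non-trivial we can always attempt a transition, and the sequence of transitions must be finite since each transition removes one email from the (finite) set $E_s$ and never adds one. So every email exchange is finite, and $(iii)$ says they all end in a state with empty email set; hence so does some one of them.

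\textbf{The implication $(ii) \Rightarrow (i)$.} Suppose some email exchange starting in $s = (E,\L)$ properly terminates. I would extract a strict partial order on $E$ witnessing legality directly from the transition sequence: say $m_B \prec m'_{B'}$ if the transition processing $m_B$ occurs \emph{after} the one processing $m'_{B'}$ (note the reversal — later-processed emails are $\prec$-smaller, because in the operational semantics an email can only be processed once the messages it depends on are already in the relevant mailbox, which happens only after \emph{those} emails have been processed). Since the exchange properly terminates, every email in $E$ is processed exactly once, so this is a genuine spo on all of $E$. Then I would check conditions L.1--L.3: if $f(i,l.m,G)_B$ is processed, the \textbf{forward} side condition forces $m \in \sigma(i)$ at that moment, and a message enters $\sigma(i)$ only when an email $m_C$ with $i \in R(m)\cup S(m)\cup C$ was processed earlier, i.e. $m_C \prec f(i,l.m,G)_B$ and $i \in S(m)\cup R(m)\cup C$ — this is exactly L.1. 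Conditions L.2 and L.3 follow the same way from the \textbf{send}/\textbf{forward} side conditions about $l \in FI(m')$ for some $m' \in \sigma(i)$, tracing $m'$ back to the email that delivered it.

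\textbf{The implication $(i) \Rightarrow (iii)$.} This is the main obstacle and where the argument needs care. Given a legal state $s = (E,\L)$ with witnessing spo $\prec$, I must show that \emph{every} maximal transition sequence from $\langle s, \sigma_0\rangle$ empties the email set — i.e. no exchange can get stuck with emails remaining. The key claim is: if $(E',\L)$ is reachable from $s$ with $E' \neq \emptyset$, then some transition is enabled, and in fact any $\prec$-minimal element of $E'$ can be processed. To see this, I would prove by induction along the exchange that the invariant ``for every $m_C \in E_s \setminus E'$ (the already-processed emails) and every $i \in S(m)\cup R(m)\cup C$, we have $m \in \sigma(i)$'' is maintained. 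Then, given a $\prec$-minimal $m_B \in E'$: whatever message(s) $m_B$ depends on for its legality (via L.1 for a forward, via L.2/L.3 for the appended note) come from emails $\prec$-below $m_B$, hence not in $E'$, hence already processed, hence their messages sit in the appropriate mailboxes by the invariant — so the side condition for processing $m_B$ is satisfied and the transition is enabled. Therefore a maximal sequence cannot terminate with $E' \neq \emptyset$; since it is finite, it properly terminates. A small point to handle carefully: L.1 puts $m$ in $\sigma(i)$ where $i$ is the forwarder, but L.2/L.3 may require a \emph{different} earlier email carrying the note $l$ in its factual information; both are $\prec$-below $m_B$ by the legality conditions, so the invariant covers both, but one should be explicit that $\prec$-minimality of $m_B$ in $E'$ (not just in $E_s$) is what makes all required predecessors already-processed. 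The one genuinely delicate step is verifying that the $\prec$ used to establish legality can be assumed to have \emph{all} the predecessors demanded by L.1--L.3 strictly below $m_B$ simultaneously — but this is exactly what L.1--L.3 assert, so no strengthening of $\prec$ is needed.
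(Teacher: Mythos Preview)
Your overall strategy is sound, and the argument for $(i) \Rightarrow (iii)$ via $\prec$-minimal elements of the remaining email set is correct. There is, however, a slip in your $(ii) \Rightarrow (i)$ step: you define $m_B \prec m'_{B'}$ when $m_B$ is processed \emph{after} $m'_{B'}$, and you explicitly flag this as a deliberate reversal. That direction is wrong. Conditions L.1--L.3 require the predecessor email (the original of a forward, or an email carrying a needed note) to be $\prec$-smaller than the email it justifies, and such a predecessor must be processed \emph{earlier} in the exchange so that its message is already in the mailbox. Indeed, your own verification of L.1 two sentences later --- ``$m_C$ \ldots\ was processed earlier, i.e.\ $m_C \prec f(i,l.m,G)_B$'' --- already uses the correct orientation and contradicts your stated definition. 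Replace ``after'' by ``before'' and drop the parenthetical about reversal; the rest of that paragraph then goes through, and this is exactly the order the paper uses.

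On the comparison of routes: the paper establishes $(i)\Leftrightarrow(ii)$ and $(ii)\Leftrightarrow(iii)$ separately. For $(i)\Rightarrow(ii)$ it extends the legality spo $\prec$ to a linear order (invoking Szpilrajn) and processes the emails in that order. For $(ii)\Rightarrow(iii)$ it gives a short confluence-style argument: given one properly terminating exchange $\xi$ and any other maximal exchange $\xi'$, the first email of $\xi$ not appearing in $\xi'$ has all its dependencies already delivered in the final mailbox of $\xi'$, so $\xi'$ was not maximal after all. Your direct proof of $(i)\Rightarrow(iii)$ collapses these two steps into a single invariant-plus-minimality argument: maintain that every already-processed email's message sits in the appropriate mailboxes, and observe that any $\prec$-minimal element of the remaining set is therefore enabled. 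This avoids both the linear-extension step and the confluence detour, making the argument marginally more self-contained; the paper's factorization through $(ii)$ has the minor expository advantage of exhibiting an explicit schedule realizing the spo.
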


The equivalence between $(i)$ and $(ii)$ states that
the property of a legal state amounts to the possibility of 
processing all the emails in an orderly (and synchronous) fashion.

\begin{proof}
Suppose $s = (E,\L)$.

$(i) \Rightarrow (ii)$.
Suppose that $s$ is a legal state. So conditions L.\ref{leg1}-L.\ref{leg3} are satisfied
with respect to an spo $\prec$. 
Extend $\prec$ to a linear ordering $\prec_l$ on $E$. 
(Such an extension exists on the account of the result of \cite{Szp30}.)
By the definition of the atomic transitions we can process the emails in $E$ in the order
determined by $\prec_l$. The resulting sequence of transitions forms a properly terminating 
email exchange starting in $s$.

$(ii) \Rightarrow (iii)$.
Let $\xi$ be a properly terminating email exchange starting in $s$ and $\xi'$ another
email exchange starting in $s$.
Let $m_B$ be the first email processed in $\xi$ that is not processed in $\xi'$. 
The final mailbox of $\xi'$ contains the message(s) on which $m$
depends on, since their full versions were processed in $\xi$ before $m_B$
and hence were also processed in $\xi'$.
So $m_B$ can be processed in the final mailbox of $\xi'$, i.e., $\xi'$ is
not a maximal sequence. This is a contradiction.

$(iii) \Rightarrow (ii)$.
Obvious.

$(ii) \Rightarrow (i)$.
Take a properly terminating email exchange $\xi$ starting in $s$. 
Take the following spo $\prec$ on the emails
of $E$: $e_1 \prec e_2$ iff $e_1$ is processed in $\xi$
before $e_2$.  By the definition of the atomic transitions 
conditions L.\ref{leg1}-L.\ref{leg3} are satisfied w.r.t. $\prec$, so $s$ is legal.
\end{proof}

Intuitively, the equivalence between the first two conditions means
that the legality of a state is equivalent to the condition that it is
possible to execute its emails in a `coherent' way.
Each terminating exchange entails a strict partial (in fact linear) ordering
w.r.t. which conditions L.\ref{leg1}-L.\ref{leg3} are satisfied.

\section{Conclusions and future work}
\label{sec:future}

Email is by now one of the most common forms of group communication.
This motivates the study here presented. The language we introduced
allowed us to discuss various fine points of email communication,
notably forwarding and the use of BCC.  The epistemic semantics we
proposed aimed at clarifying the knowledge-theoretic consequences of
this form of communication. Our presentation focused on the issues of
epistemic content of the emails and common knowledge.

This framework also leads to natural questions concerning
axiomatization of the introduced language and the decidability of its 
semantics. Currently we work on 

\begin{itemize}
\item a sound and complete axiomatization of the epistemic language $\Lang$
of Section \ref{sec:epistemic}; at this stage we have such an axiomatization
for the non-epistemic formulas,

\item the problem of decidability of the truth definition given in 
Section \ref{sec:epistemic}; at this stage we have a decidability result
for positive formulas and for formulas without nested epistemic operators,

\item a comparison of the proposed semantics with the one based on 
sequences ('histories') of emails rather than partially ordered sets
of emails.
\end{itemize}

In our framework, as explained in Sections~\ref{sec:epistemic} and
\ref{sec:dis}, communication is synchronous.  This is of course a
simplifying assumption and should be viewed as a first step in
analyzing epistemic reasoning in email exchanges.
We plan to extend our results to the more general framework of
\cite{BM10}, by assuming for each agent a known time bound by which he reads
his emails.

When moving to asynchronous communication we should be aware of the
already mentioned result of \cite{HM90} that common knowledge of
nontrivial facts cannot be achieved. In view of this negative result
various weaker forms of common knowledge were proposed in the
literature.  In particular \cite{NeiTou93} introduced the concept of a
\emph{timestamped common knowledge} and proposed a communication
primitive that achieves it. In turn, \cite{PanTay92} introduced the
notion of a \emph{concurrent common knowledge} defined using Lamport's
causality notion.  These variants could be studied for the case of
email exchanges with asynchronous communication, taking the present
framework as a departure point.

Communication by email suggests other forms of knowledge.  Recently
\cite[Chapter 6]{Sie12} considered \emph{potential knowledge} and
\emph{definitive knowledge} in the context of email exchanges.  When a
message is sent to an agent, that agent acquires potential knowledge
of it. Only when he forwards the message he acquires definitive
knowledge of the message.  The idea is that when a message is sent to
an agent one cannot be sure that he read it.  Only when he forwards it
one can be certain that he did read it. The considered framework is an
adaptation of the one presented here.  The common knowledge is not
considered but a decision procedure is presented for all considered
epistemic formulas.

Another extension worthwhile to study is one in which the agents
communicate richer basic statements than just notes. We already
indicated in Section \ref{sec:bcc} that sending messages containing a
formula $i \blacktriangleleft m$ increases the expressiveness of the
messages from the epistemic point of view.  One could also consider in
our framework sending epistemic formulas, a feature recently studied
in \cite{SvE11} in a setting with a finite number of messages and the
BCC feature absent.

Finally, even though this study was limited to the epistemic aspects
of email exchanges, it is natural to suggest here some desired
features of emails. One is the possibility of forwarding a message in
a provably intact form. This form of forward, used here, is
present in the VM email system integrated into the
\texttt{emacs} editor; in VM forward results in passing the message as
an attachment that cannot be changed.  Another, more pragmatic one and
not considered here, is disabling the reply-all feature for the BCC
recipients so that none of them can by mistake reveal that he was a
BCC recipient.  Yet another one is a feature that would simulate signing
of a reception of a registered letter --- opening such a `registered
email' would automatically trigger an acknowledgment. Such an
acknowledgment would allow one to achieve in a simple way the above
mentioned definitive knowledge.

\subsection*{Acknowledgements}
We thank all three referees for detailed reports that allowed us to improve
the paper in a number of respects.
We acknowledge helpful early discussions with Jan van Eijck and Rohit Parikh
and useful referee comments of the preliminary version that was presented at a workshop.
Yoram Moses helped us to clarify that our results are based on a synchronous communication.

\bibliography{library}

\begin{thebibliography}{10}

\bibitem{AWZ09}
K.~R. Apt, A.~Witzel, and J.~A. Zvesper.
\newblock Common knowledge in interaction structures.
\newblock In {\em Proceedings of TARK XII}, pages 4--13. The ACM Digital
  Library, 2009.

\bibitem{Bab90}
L.~Babai.
\newblock E-mail and the unexpected power of interaction.
\newblock In {\em Fifth Structure in Complexity Theory Conference}, pages
  30--44, 1990.

\bibitem{BM10}
I.~Ben-Zvi and Y.~Moses.
\newblock Beyond {Lamport's} {{\it Happened-Before}}: On the role of time
  bounds in synchronous systems.
\newblock In {\em Proceedings of DISC 2010)}, pages 421--436, 2010.

\bibitem{vBvEK06}
J.~v. Benthem, J.~van Eijck, and B.~Kooi.
\newblock Logics of communication and change.
\newblock {\em Information and Computation}, 204(11):1620--1662, 2006.

\bibitem{chandy_processes_1986}
K.~M. Chandy and J.~Misra.
\newblock How processes learn.
\newblock {\em Distributed Computing}, 1(1):40--52, Mar. 1986.

\bibitem{FHMV_RAK}
R.~Fagin, J.~Halpern, M.~Vardi, and Y.~Moses.
\newblock {\em Reasoning about knowledge}.
\newblock MIT Press, Cambridge, MA, USA, 1995.

\bibitem{HM90}
J.~Halpern and Y.~Moses.
\newblock Knowledge and common knowledge in a distributed environment.
\newblock {\em Journal of the ACM}, 37:549--587, 1990.

\bibitem{Lam78}
L.~Lamport.
\newblock Time, clocks, and the ordering of events in a distributed system.
\newblock {\em Communications of the ACM}, 21(7):558--565, 1978.

\bibitem{EC09}
E.~Lien and P.~C. {\"O}lveczky.
\newblock Formal modeling and analysis of an {IETF} multicast protocol.
\newblock In {\em Seventh IEEE International Conference on Software Engineering
  and Formal Methods, SEFM 2009}, pages 273--282. IEEE Computer Society, 2009.

\bibitem{NeiTou93}
G.~Neiger and S.~Toueg.
\newblock Simulating synchronized clocks and common knowledge in distributed
  systems.
\newblock {\em Journal of the ACM}, 40(2):334--367, 1993.

\bibitem{Pac10}
E.~Pacuit.
\newblock Logics of informational attitudes and informative actions.
\newblock {\em Journal of the Indian Council of Philosophical Research}, 27(2),
  2010.
\newblock 37 pages.

\bibitem{pacpar07}
E.~Pacuit and R.~Parikh.
\newblock Reasoning about communication graphs.
\newblock {\em Interactive Logic. Proceedings of the 7th Augustus de Morgan
  Workshop}, pages 135--157, 2007.

\bibitem{PanTay92}
P.~Panangaden and K.~Taylor.
\newblock Concurrent common knowledge: Defining agreement for asynchronous
  systems.
\newblock {\em Distributed Computing}, 6:73--93, 1992.

\bibitem{PR03}
R.~Parikh and R.~Ramanujam.
\newblock A knowledge based semantics of messages.
\newblock {\em Journal of Logic, Language and Information}, 12(4):453--467,
  2003.

\bibitem{Plo82}
G.~D. Plotkin.
\newblock An operational semantics for {CSP}.
\newblock In D.~Bj{\o}rner, editor, {\em Formal Description of Programming
  Concepts II}, pages 199--225, Amsterdam, 1982. North-Holland.

\bibitem{Sie12}
F.~Sietsma.
\newblock An applicable logic of emails and knowledge.
\newblock Manuscript, 2012.

\bibitem{SvE11}
F.~Sietsma and J.~van Eijck.
\newblock Message passing in a dynamic epistemic logic setting.
\newblock In {\em Proceedings of TARK XIII}, pages 212--220. The ACM Digital
  Library, 2011.

\bibitem{Szp30}
E.~Szpilrajn.
\newblock Sur l'extension de l'ordre partiel.
\newblock {\em Fundamenta Mathematicae}, 16:386--389, 1930.

\bibitem{WSE10}
Y.~Wang, F.~Sietsma, and J.~van Eijck.
\newblock Logic of information flow on communication channels.
\newblock In {\em Proceedings of AAMAS-10}, pages 1447--1448, 2010.

\end{thebibliography}
\bibliographystyle{abbrv}

\end{document}